\documentclass[dvinames, 11pt]{article}

\usepackage{booktabs}   
\usepackage{subcaption} 
\usepackage[left=3cm, right=3cm, bottom=2.4cm, top=2.4cm]{geometry}
\usepackage{lscape}
\usepackage{amsmath}
\usepackage{amsthm}
\usepackage{amssymb}
\usepackage{latexsym}

\usepackage{proof}
\usepackage{bussproofs}
\usepackage{amssymb}
\usepackage{amsmath}
\usepackage{mathrsfs}


\newtheorem{theorem}{Theorem}[section]
\newtheorem{lemma}[theorem]{Lemma}
\newtheorem*{theorem*}{Theorem}
\newtheorem{proposition}[theorem]{Proposition}

\theoremstyle{definition}
\newtheorem{definition}{Definition}[section]
\newtheorem{example}{Example}[section]
\theoremstyle{remark}
\newtheorem{remark}{Remark}[section]

\newcommand{\ergo}{\Rightarrow}
\newcommand{\impl}{\rightarrow}

\newcommand{\fal}{\bot}

\usepackage{cmll}
\newcommand{\pa}{\parr}
\newcommand{\te}{\otimes}
\newcommand{\lo}{\multimap}

\newcommand{\lam}{\lambda}
\newcommand{\send}[1]{\overline{#1}.\,}

\newcommand{\lammu} {\lambda \mu}
\newcommand{\lamp} {\lambda_\pa}
\newcommand{\p}{\mid}

\newcommand{\distp}[2]{\overline{#1}\, \overline{#2}\,}
\newcommand{\sendm}[3]{\overline{#1} #3.#2}

\newcommand{\termfal}{\circ}

\newcommand{\calc}{\mathcal{C}}
\newcommand{\cald}{\mathcal{D}}
\newcommand{\calp}{\mathcal{P}}
\newcommand{\cale}{\mathcal{E}}

\newcommand{\I}{\mathrm{I}}
\newcommand{\E}{\mathrm{E}}

\newcommand{\cut}{\text{cut}}

\newcommand{\red}{\mapsto}

\newcommand{\NLL}{\mathrm{NMLL}}
\newcommand{\SLL}{\mathrm{MLL}}

\newcommand{\cs}{\mathrm{cs}}

\newcommand{\abort}{{\mathsf{close}}}

\title{$\pa$ means Parallel:  Multiplicative Linear Logic Proofs as Concurrent Functional Programs}
\author{
  \begin{tabular}{ccc} Federico Aschieri\footnote{Funded by FWF grant
P32080-N31.} &$\qquad\qquad$& Francesco A. Genco\footnote{Funded by
ANR JCJC project Intuitions Bolzaniennes.} \\TU Wien & & IHPST,
Universit\'e Paris 1
  \end{tabular}}
\begin{document}
\maketitle

\begin{abstract}
Along the lines of the Abramsky ``Proofs-as-Processes'' program, we present an  interpretation of multiplicative linear logic as typing system for concurrent functional programming. In particular, we study a 
linear multiple-conclusion natural deduction system and show it is isomorphic to a simple and natural extension of $\lambda$-calculus with parallelism and communication primitives, called $\lamp$. We shall prove that $\lamp$ satisfies all the desirable properties for a typed programming language: subject reduction, progress, strong normalization and confluence.
\end{abstract}

\section{Introduction}

\subsection*{The Proofs-as-Processes Program}
Introduced by Girard in 1987, linear logic was announced right off the bat as the logic of parallelism. According to  \cite{Gir87}, the ``\emph{connectives of linear logic have an obvious meaning in terms of parallel computation (...) In particular, the multiplicative fragment can be seen as a system of communication without problems of synchronization}.'' 
Stimulated by this remark and the further observation that ``\emph{cut elimination is parallel communication between processes}'' (\cite{GLT89}, pp. 155), Abramsky \cite{Abramsky94} launched the ``Proofs-as-Processes'' program, whose goal was to support  with computational evidence Girard's claims. Namely, the goal was  ``\emph{to show how a process calculus, sufficiently expressive to allow a reasonable range of concurrent programming examples to be handled, could be exhibited as the computational correlate of a proof system}'' \cite{Abramsky94}.  A candidate process calculus was Milner's $\pi$-calculus \cite{Milner}; a possible  proof system, the linear sequent calculus. The hope was that linear logic could provide a canonical and firm foundation to concurrent computation, serving as a tool for typing and reasoning about communicating processes. 

\subsection*{Early successes}

  After the early work of \cite{BellinScott},  the turning point was the discovery by Caires and Pfenning \cite{CairesPfenning} of a tight correspondence between 
  intuitionistic linear logic and a session typed $\pi$-calculus, the motto being: linear 
  propositions as session types, proofs as processes, cut-elimination as communication. This was yet another instance of the famous Curry-Howard 
  correspondence \cite{Wad15}, linking proofs to programs, propositions to types, proof normalization to computation. Soon after, Wadler \cite{Wadler2012} introduced the session typed $\pi$-calculus CP, shown to tightly correspond to classical linear logic. 
  
  \subsection*{Limitations}

This important research notwithstanding, it appears that there is still ground to cover toward canonical and firm foundations for concurrent computation. First of all, asynchronous communication does not appear to rest on solid foundations. Yet this communication style is easier to implement and more practical than the purely synchronous paradigm, so it  is   widespread  and asynchronous typed process calculi have been already investigated (see~\cite{Gay2010}). Unfortunately, linear logic has not so far provided via Curry-Howard a logical account of asynchronous communication. The reason is 
 that there is a glaring discrepancy between \emph{full} cut-elimination and $\pi$-calculus reduction which has not so far been addressed. On one hand, as we shall see, linear logic does support asynchronous communication, but only through the \emph{full} process of cut-elimination, which indeed makes essential use of asynchronous communication. On the other hand, the $\pi$-calculus of \cite{CairesPfenning} only mimics a \emph{partial} cut-elimination process that only eliminates top-level cuts. Indeed, by comparison with its type system, the $\pi$-calculus lacks some necessary computational reduction rules.
Some of the missing  reductions, corresponding to commuting conversions, were provided in Wadler's CP. The congruence rules that allow the extra reductions to mirror full cut-elimination, however, were rejected: in Wadler's \cite{Wadler2012} own words, ``\emph{such rules do not correspond well to our notion of computation on processes, so we omit them}''. A set of reductions similar to that rejected by Wadler is regarded in \cite{Perez2012} as sound relatively to a notion of ``typed context bisimilarity''. The notion, however, only ensures that two ``bisimilar'' processes have the same input/output behaviour along their main communication channel; the internal synchronization among the parallel components of the two processes is not captured and may differ significantly. Thus, the extensional flavor of the bisimilarity notion prevents it to detect the intensionally different behaviour of the related processes, that is, how differently they communicate and compute.

There is no way out: attempts to model asynchronous communication by mirroring full cut-elimination with Milner's $\pi$-calculus are bound to fail. 
An example will clarify the matter. In the process $\nu z( x(y).z(w). P \p z[a].Q)$, the second process $z[a].Q$ wishes to transmit the message $a$ along the channel $z$ to the first process. 
However, the first process is not ready to receive on $z$ yet, because it is designed to first receive a message on the channel $x$. In Milner's $\pi$-calculus, this is a deadlock, because input and output actions are rigidly ordered and \emph{blocking}, for the sake of synchronization. Indeed, as the heir of
CCS~\cite{Mil80}, $\pi$-calculus is in its essence a formalism for
representing synchronization: a communication happens when two
processes synchronize by consuming two dual constructs, as for
instance $\overline{x}\, m$ and $x(y)$. However, when a process is seen as proof in the logical system, the order between prefixes becomes less relevant, thus in CP there is an extra commuting conversion $$\nu z( x(y).z(w). P \p z[a].Q)\mapsto x(y). \nu z( z(w).P\p z[a].Q)$$
In CP, there is no further reduction allowed, and rightly so: otherwise the blocking nature of the prefix $x(y)$ would be violated, making the very construct pointless for synchronization. On the other hand, for cut-elimination the now possible reduction
$$x(y). \nu z( z(w).P\p z[a].Q) \mapsto x(y). \nu z( P[a/w]\p Q)$$
is needed. In this case, however, as we have seen the very existence of a construct for input becomes misleading and would be more coherent to drop it altogether, so that the first process would be directly re-written and reduced as follows
$$\nu z( P \p w[a].Q)\mapsto \nu z( P[a/w]\p Q)$$
 This is indeed the approach of the present paper. Sensitive to similar concerns, \cite{KMP2019} introduced HCP, which features delayed input-output actions, modeling a form of asynchrony. However, the change in the transitions with respect to CP undermines the synchronous nature of Milner's $\pi$-calculus, as we have seen, making it quite a different calculus and jeopardizing its canonicality ambitions. Moreover, HCP comes at the cost of adding redundant logical structures to the linear sequent calculus, namely hypersequents \cite{Avron91}, which were instead used by Avron to \emph{increase} the logical power of sequents. 

There are other limitations in the current linear logic foundations of concurrent computation.  The linear sequent calculus fails to  combine seamlessly functional and concurrent computation.  Indeed, according to Abramsky et al. \cite{Abramsky96} it is a ``\emph{major open problem (...) to combine our understanding of the functional and concurrent paradigms, with their associated mathematical underpinnings, in a single unified theory}''. The standard sequent calculus falls short in several other respects: it prevents deadlocks, but by excessively restricting possible communication patterns, for instance by forbidding cyclic configurations; it does not permit in any simple way code mobility, that is, the ability to communicate code, like in higher-order $\pi$-calculus \cite{San1993}; it does not represent multi-party communication sessions; the syntax of processes does not match exactly the syntax of the typing system. 
 In order to address some of these pitfalls,  much work has been done, by either considering non-standard proof systems for linear logic  or by  extending the logic itself; unfortunately for the canonical-foundation enterprise and practical usability, these systems tends to be proof-theoretically artificial. A system adding a functional language on top of linear-logic-based typed $\pi$-calculus was studied in \cite{TCP2013}.   A solution for allowing, also in the context of the session typed $\pi$-calculus, cyclic communication patterns,  was given in  \cite{Dardha} by coming up with a new variant of linear logic; this makes possible for example to implement Milner's scheduler process. Multi-party session types  has been treated for example in \cite{MontesiCarbone}. 
 A hypersequent proof system for linear logic whose syntax perfectly matches the syntax of $\pi$-calculus has been introduced in \cite{KMP2019}. A logic for typing asynchronous communication in a non-standard process calculus has been proposed in \cite{Pfenning2018}.

\subsection*{$\lamp$: A concurrent extension of $\lambda$-calculus arising from linear logic}

We shall illustrate a new approach in the study of linear logic  as typing system for concurrent programs. 
Our framework does not suffer any of the mentioned limitations of linear sequent calculus.
 Inspired by \cite{paivalinear, Par91}, we move away from sequent calculus and adopt a multiple-conclusion natural deduction for multiplicative linear logic, which we show to be isomorphic to a concurrent extension of $\lambda$-calculus, called  $\lamp$. As a result, $\lamp$ is typed by linear multiple-conclusion natural deduction; it is naturally asynchronous,  yet it can model synchronous communication as well; it allows arbitrary communication patterns, like cyclic ones, but prevents deadlocks; it allows higher-order communication and code mobility;  it mirrors perfectly a full normalization procedure resulting in analytic proofs; its types can be read as program specifications in the traditional way, although they do not describe communication protocols as session types do.

The typing system of $\lamp$, in its non-linear variant, has been very well
studied: computationally, it was famously interpreted by Parigot
\cite{Parigot92} as $\lammu$-calculus; proof-theoretically, it was thoroughly investigated by \cite{CELLUCCI1992}.
Although proof-nets have sometimes been dubbed ``\emph{the natural deduction of linear logic}'', our typing system is closer to a natural deduction.
 It is not exactly  \emph{natural} deduction \cite{Prawitz}, since it is multiple-conclusion, hence it is natural for building and typing parallel programs, but  not  much so for modeling  \emph{human} deduction.
   However, our system is based on natural-deduction normalization rather than sequent-calculus cut-elimination and linear implication $\lo$ is a primitive connective, with the standard introduction and elimination rules. 
Like proof-nets, $\lamp$ abstracts away from those inessential permutations in the order of rules that plague multiple-conclusion logical systems.
 As a result, $\lamp$ avoids commuting conversions, which have never been convincing from the computational point of view. Actually, $\lamp$ is not only a concurrent $\lambda$-calculus, it also looks like the natural deduction version of proof nets. Finally, it enjoys all the good properties that a well-behaved functional programming language should have: subject reduction, progress, strong normalization, and confluence. It is a step forward in the direction of that elusive concurrent $\lambda$-calculus which Milner attempted to find, before creating CCS out of the failure\footnote{According to Milner \cite{Milner84}, ``\emph{CCS is an attempt to provide a paradigm for concurrent computation. It arose after several unsuccessful attempts by the author to finds a satisfactory generalization of the lambda calculus, to admit concurrent computation.}''}.

\subsection*{The Insight}

The main insight behind the typing system of $\lamp$ is to break into two rules the linear implication introduction. One rule is intuitionistic and yields  functional abstraction, the second is classical and yields communication. Namely, the first rule, taken from \cite{paivalinear}, introduces the $\lambda$-operator, the second a communication channel with continuation:
  \[\infer[\lo \I\ (\mbox{ if $x$ occurs in $t$})]{    \Gamma \ergo \lambda{x}\,  t: A \lo B , \Delta }{   
\Gamma , x: A \ergo t: B  , \Delta  }\]

\[\infer[\lo \I\ (\mbox{ if $x$ occurs  in $\Delta$})] {    \Gamma \ergo \send{x}  t: A \lo B , \Delta }{   
\Gamma , x: A \ergo t: B  , \Delta  }
\]
Indeed, a communication channel   $\send{x}  t$ is supposed to take as argument a term $u: A$, transmit it along the channel $x$ and then continue with $t: B$. We shall indeed denote the application of  $\send{x}  t$ to $u$ as $\sendm{x}{t}{u}$, as in $\pi$-calculus: a promising sign!

The parallel operator $\p$ is introduced by the $\pa$ introduction rule:
\[ \infer[\pa \I ]{ \Gamma \ergo s\p t : A\pa B , \Delta }{ \Gamma
\ergo s: A , t: B , \Delta } \]
finally doing justice to the $\pa$ connective, called \emph{par} since the beginning.


\section{Multiplicative Linear Logic  as a Concurrent Functional  Calculus } \label{sec:calculus}

In this section, we present the syntax, the typing system $\NLL$ and the reduction rules of the concurrent functional calculus $\lamp$. 


\subsection{The calculus $\lamp$}

\begin{definition}[Terms of $\lamp$]\label{def:typing} The
\emph{untyped terms of $\lamp$} are defined by the following grammar:
$$\begin{aligned} u, v::=\ &\mbox{ }\termfal\qquad &(\mbox{terminated process})\\
&|\  \send{x}{u} \qquad &(\mbox{output channel $\overline{x}$ with continuation})\\ 
&|\ \distp{x}{y}\, u\qquad &(\mbox{output channels $\overline{x}, \overline{y}$ without continuation})\\
&|\  u \p v  \qquad &(\mbox{parallel composition})\\
&|\ \abort (u) \qquad &(\mbox{process termination})\\
&|\ x \qquad &(\mbox{variable})\\
&|\  u\, v\qquad &(\mbox{function application}) \\ 
& \mathsf{syntactic}\ \mathsf{sugar}\\
&|\ \lambda x\, u := \send{x}{u}\qquad &\mbox{ (function definition (provided $x$  occurs in $u$))}\\
&|\ \sendm{x}{u}{v} := (\send{x}{u})v 
\end{aligned}$$
\end{definition} The calculus $\lamp$ is nothing but a linear lambda
calculus extended with communication primitives. When $x$ occurs in
$u$, the operator $\send{x}{u}$ is denoted and behaves exactly as
$\lambda x\, u$. When $x$ does not occur in $u$, the term
$\send{x}{u}$ behaves as an output channel $\overline{x}$ followed by
a continuation $u$. Channels in $\lamp$ are thus first-class citizens
that can be transmitted, shared, moved and applied to messages just
like functions to arguments.  When the channel $\send{x}{u}$ is
applied to an argument $v$, it will be denoted as $\sendm{x}{u}{v}$,
because it behaves exactly as its $\pi$-calculus counterpart: it
transmits the message $v$ and continues with $u$. The term
$\distp{x}{y}\, u$ is an output channel which transmits the two
messages contained in $u$ and then terminates.

If we omit parentheses, each term of $\lamp$ can be rewritten, not
uniquely, as a parallel composition of several terms: $t_{1}\p t_{2}\p
\ldots \p t_{n}$.  We will adopt this notation throughout the paper
since it is irrelevant how the parallel operator $|$ associates.

\begin{remark} An alternative choice for the syntax of $\lamp$ may
have been:
$$ u, v\quad ::= \quad x \quad  |\quad  u\,v \quad |\quad  \lambda x\, u\quad | \quad \sendm{x}{u}{v} \quad \mid\quad u \p v\quad |\quad \distp{x}{y}\, u\quad |\quad \termfal \quad |\quad\abort (u)$$
Then the operator $\sendm{x}{u}{v}$ would have been typed by a
cut-rule and the construct $\send{x}u$ would have been defined as:
$\lambda y\, \sendm{x}{u}{y}$. This approach, however, would have
raised some technical complications: when transmitting a message $v$,
one would have to deal with the free variables of $v$ which are bound
by $\lambda$ operators occurring outside $v$. A similar issue is
solved in~\cite{ACG2018gandalf}, and a simpler solution could be
adopted here. Nevertheless, we adopt the present approach as it is the
smoothest.
\end{remark}

\subsection{The Typing System $\NLL$}
\begin{definition}[Types]
The \textbf{types} of $\NLL$ are built in the usual way from propositional variables, $\fal$ and the connectives $\lo$ and $\pa$.
\end{definition}
\noindent Since the typing system of $\lamp$ is classical linear
logic, the $\te$ connective can be defined in terms of the
others. Indeed, its computational interpretation seems less primitive,
as it does not define any computational construct that is not already
captured by $\lo$ and $\pa$. Hence, we leave it out of the
system.

\begin{definition}[Sequents]\label{def:sequent}
The judgments of the typing system $\NLL$ for $\lamp$ are \textbf{sequents} $\Gamma \ergo \Delta$, where:

1. $\Gamma=x_{1}: A_{1}, \ldots, x_{n}: A_{n}$ is a sequence of distinct variable declarations, each $x_{i}$ being a variable, $A_{i}$ a type and $x_{i}\neq x_{j}$ for $i\neq j$.

2. $\Delta= t_{1}: B_{1}, \ldots, t_{n}: B_{n}, t_{n+1}, \ldots, t_{m}$, each $t_{i}$ being a term of $\lamp$ and $B_{i}$ a type. 

\end{definition}
\noindent The typing system of $\lamp$ is a linear version of the
multiple-conclusion classical natural deduction
of~\cite{CELLUCCI1992}. It is presented in
Table~\ref{tab:type-assignment} and discussed in detail below.  As
usual, we interpret a derivable sequent $\Gamma \ergo \Delta$ as a
judgement giving types to a sequence of terms $\Delta$, provided their
variables are used with types declared in $\Gamma$. The terms in
$\Delta$ that have no type are processes that return no value, and
always have the form $\abort(u)$, with $u: \fal$. These terms however
cannot be ignored, as they may be involved in some communication; thus
they are evaluated and then terminated.

\subsection{Linearity of Variables}
 
The only condition that the typing rules must satisfy in order to be
applied is that their premises share no variable. This amounts to
requiring the linearity of variables and to making sure that  there is no clash
among the names of the output channels. As we shall see in
Proposition~\ref{prop:linchan}, the result of this implicit renaming
is that a typed term cannot contain two distinct occurrences of the
same output channel $\overline{x}$ nor of the same input channel
$x$. This property has two consequences. First,  in $\lamp$ we do not need, and
actually do not use, any renaming during reductions.
Secondly, communication channels cannot be used twice: once a message
is transmitted along a channel, the channel is consumed and disappears
forever, both  from the sender and from  the receiver of the message. These two
processes can of course keep communicating with each other or with
other processes, but they must use different channels for further
communications.


\begin{table}[h!]
  \centering
  \begin{center}
  \hrule
\begin{small}

\[ x:A \ergo x:A\]

  \[\infer[\lo \I\ (\mbox{ if $x$ occurs in $t$})]{    \Gamma \ergo \lambda{x}\,  t: A \lo B , \Delta }{   
\Gamma , x: A \ergo t: B  , \Delta  }\]

\[\infer[\lo \I\ (\mbox{ if $x$ occurs  in $\Delta$})] {    \Gamma \ergo \send{x}  t: A \lo B , \Delta }{   
\Gamma , x: A \ergo t: B  , \Delta  }
\]


\[\infer[\lo \E]{ \Gamma , \Sigma  \ergo st:  B  , \Delta , \Theta }{     \Gamma \ergo s: A
    \lo B , \Delta  &&     \Sigma \ergo t: A , \Theta }\]
   
%

\[ \infer[\pa \I ]{ \Gamma \ergo s\p t : A\pa B , \Delta }{ \Gamma
\ergo s: A , t: B , \Delta } \]


\[ \infer[\pa \E ]{\Gamma , \Sigma_1 ,\Sigma _2 \ergo \distp{x}{y}  s:\fal , \Delta , \Theta _1 ,\Theta _2
}{ \Gamma \ergo s: A \pa B , \Delta && \Sigma_1 , x: A \ergo
\Theta _1 && \Sigma _2 , y:B \ergo \Theta _2 }
\]

\[\infer[\fal\I]{\Gamma \ergo \Delta , \termfal :\fal }{\Gamma \ergo
\Delta}
\qquad \infer[\fal \E ]{\Gamma \ergo \abort(t), \Delta}{\Gamma \ergo  t: \fal , \Delta}
\]

\medskip 

Each rule in this table can only be applied if its premises share no variable 
\end{small}
\medskip

\hrule
\end{center}
\caption{The type assignment rules of $\NLL$.}
\label{tab:type-assignment}
\end{table}

\subsection{$\pa$ as a Parallel Operator}
 The typing system of $\lamp$ is tailored to parallel computation. It is based  on the view that the different processes of a parallel program are 
 not independent entities, but make sense only as components of a larger and more complex system: they share resources through communication and they 
 exploit the computational power of one another. 
Hence, one cannot build and type one process, without building at the same time all the interconnected processes.
Coherently, a derivation of a sequent $\Gamma \ergo t_{1}: A_{1}, \ldots, t_{n}: A_{n}, t_{n+1}, \ldots t_{m}$ builds in parallel the terms $t_{1}, \ldots, t_{m}$, following the logical structure of the intended communication pattern. The commas in the conclusion of a sequent, indeed, mean exactly parallel composition. Since the comma, in the final analysis, is the $\pa$ connective, we obtain the following typing rule:
\[ \infer[\pa \I ]{ \Gamma \ergo s\p t : A\pa B , \Delta }{ \Gamma
\ergo s: A , t: B , \Delta } \]
As the $\pa$ is commutative and associative, so it is the comma. Our reduction rules, indeed, treat parallel composition as commutative and associative, since they bypass the order and association of processes. Technically, however, the term $s \p t: A\pa B$ is different from the term $t\p s: B\pa A$, since $A\pa B$ and $B\pa A$ are logically equivalent, but not the same type. It would be possible to add explicit commutation and association congruences on the outermost parallel operators $|$, provided Subject Reduction is restated modulo logical equivalence. Since there is no computational gain in adding such congruences, we do not consider them.

\subsection{Linear $\lambda$-calculus}
$\lamp$ contains the linear $\lambda$-calculus as subsystem.  On one hand, function application and variable declaration are respectively given by the rule
\[\infer[\lo \E]{ \Gamma , \Sigma  \ergo st:  B  , \Delta , \Theta }{     \Gamma \ergo s: A
    \lo B , \Delta  &&     \Sigma \ergo t: A , \Theta }\]
    and the axiom
      \[x: A\ergo x: A\]
      as usual. On the other hand,  the treatment of linear implication is the main novelty of the type system of $\lamp$. Namely, the linear implication  introduction rule is broken 
 in two rules. The first is intuitionistic and yields  functional abstraction, the second is classical and yields communication. Functional abstraction $\lambda$ is obtained by the rule:
\[\infer[\lo \I\ (\mbox{ if $x$ occurs in $t$})]{    \Gamma \ergo \lambda{x}\,  t: A \lo B , \Delta }{   
\Gamma , x: A \ergo t: B  , \Delta  }\]
Such a rule has been studied by~\cite{paivalinear} in the context of multiple-conclusion intuitionistic linear logic. Indeed, when $x$ occurs in $t$, the term $\lambda x\, t$ is a standard linear $\lambda$-term.
The application of a function to an argument generates in $\lamp$ the standard $\lambda$-calculus reduction
    \[(\lambda x\,u ) t\mapsto u[t/x]\]
  where $u[t/x]$ is the term obtained from $u$ by replacing the occurrence of $x$ with $t$. 

\subsection{Communication}

The other rule for the linear implication introduction is:
\[\infer[\lo \I\ (\mbox{ if $x$ occurs  in $\Delta$})] {    \Gamma \ergo \send{x}  t: A \lo B , \Delta }{   
\Gamma , x: A \ergo t: B  , \Delta  }
\]
In this case, the variable $x$ does not occur in $t$, therefore the term $\send{x} t$ is no more a $\lambda$-term and behaves like an output channel with continuation $t$. Namely, when $\send{x} t$ is
applied to an argument $u$, a communication is triggered: the
term $u$ is transmitted along the channel $x$ to another process inside $\Delta$ and will replace $x$ inside that
process. We thus have in $\lamp$ the reduction rules:
\[  \calc [\sendm{x}{s}{u}]  \p  \cald [x]   \quad
\red \quad \calc [s]  \p \cald [u]  \]
\[  \calc [x]  \p \cald [\sendm{x}{s}{u}]   \quad
\red \quad  \calc [u]  \p \cald [s]  \]
where  $\calc[\; ]$ and $\cald[\; ]$ are arbitrary contexts, as usual defined as follows.

\begin{definition}[Contexts]\label{def-context} A \textbf{context} $\calc[\; ]$ is a term
containing  a variable $[\;]$ occurring exactly once. For any term $t$ we denote by
$\calc[t]$ the result of replacing $[\;]$ by $t$ in $\calc[\; ]$. 
$\calc[\; ]$ is \textbf{simple} if it does not contain subterms of the form $\cald[\;]\p u$ or  $u \p\cald[\;]$ .
\end{definition} 
\noindent In the reduction above, by definition of context, only one occurrence of $x$ is replaced by $u$. Therefore, if the context $\cald[\;]$ is not simple, then the term $\cald[x]$ is a parallel composition of several processes and only one of them will actually receive  $u$. For instance, we have 
\[\sendm{x}{\termfal}{u}\p (\lambda y\, y \p z\, x) \red {\termfal}\p (\lambda y\, y \p z\, u).\] 
Unlike in pure $\pi$-calculus, the output operator $\sendm{x}{s}{u}$ may transmit its message even if it is surrounded by a potentially non-empty context. This is necessary for fully normalizing the proofs: the alternative is to give up on a full correspondence with the normalization/cut-elimination process, as done in Wadler's typed $\pi$-calculus CP \cite{Wadler2012}.
In our case, 
 it might happen that some variable $y$ of the message $u$ lies under the scope of an operator $\lambda y$ occurring in the context. There is no problem, though: 
 the operator $\lambda y$ is just a notation which automatically becomes $\overline{y}$, if the need arises -- namely, if $y$ changes location.  For instance, 
 let us consider the natural reduction strategy that avoids to reduce under $\lambda$:  \[(\lambda y\, \sendm{x}{\termfal}{(y\, w)})v \p  x\ \red\ \sendm{x}{\termfal}{(v\, w)} \p  x\ \red\ \termfal \p v\, w\] 
If we instead first fire the communication reduction along the channel $x$, we obtain the very same result, but by a different mechanism:
\[(\lambda y\, \sendm{x}{\termfal}{(y\, w)})v \p x\;
=\;(\send{y}\sendm{x}{\termfal}{(y\, w)})v \p x \;\red\; (\send{y}\, {\termfal})v
\p y\, w \;=\; \sendm{y}{\termfal}{v} \p y\, w\;\red\; \termfal \p v\, w\]
What happens in this second reduction is that the value of $y$, still
not known during the first communication, is automatically redirected
and communicated to the new location of $y$ by the second
communication.  Therefore, the code mobility issues created by
closures, solved in~\cite{ACG2017} in a more complicated way,
literally disappear in $\lamp$.

Similarly, the ``capture'' of variables during communications by output operators $\overline{x}$ creates no issue. For instance, the result of the reduction above communicating first along $x$, then along $y$:
$$\sendm{x}{y}{v} \p \sendm{y}{\termfal}{x}\;\red\; y \p \sendm{y}{\termfal}{v}\;\red\; v\p \termfal$$
can also be obtained by communicating first along $y$, then along $x$, thus letting the variable $x$ to be \emph{captured} by the $\overline{x}$ operator, which then automatically becomes $\lambda x$:
$$\sendm{x}{y}{v} \p \sendm{y}{\termfal}{x}\;\red\; \sendm{x}{x}{v} \p \termfal\; =\; (\lambda x\, x) v \p \termfal\;\red\; v\p \termfal$$

\subsection{Output channels without continuation}
The elimination rule for the connective $\pa$ types binary output channels with no associated continuation, a kind of  operator also found in the asynchronous $\pi$-calculus \cite{Bou92}:
\[ \infer[\pa \E ]{\Gamma , \Sigma_1 ,\Sigma _2 \ergo \distp{x}{y}  s:\fal , \Delta , \Theta _1 ,\Theta _2
}{ \Gamma \ergo s: A \pa B , \Delta && \Sigma_1 , x: A \ergo
\Theta _1 && \Sigma _2 , y:B \ergo \Theta _2 }
\]
When $s$ is a parallel composition $u \p v$, the term $\distp{x}{y}  s$ transmits $u$ and $v$ respectively along the channel $x$ and $y$ and then terminates with no value, as reflected by its type $\bot$. As a result, exactly one occurrence of $x$ will be replaced by $u$ and exactly one occurrence of $y$ will be replaced by $v$, in two different processes or in the same. The correspondent reduction rules of $\lamp$ are:
\[\calc [\distp{x}{y} (s \p t)] \p \cald[x][y] \quad \red \quad \calc
[\termfal]\p \cald[s][t]
 \]
\[ \cald[x][y]\p \calc [\distp{x}{y} (s \p t)] \quad \red \quad
\cald[s][t]\p \calc [\termfal]
 \]
  \[ \cald[x] \p \calc [\distp{x}{y} (s \p t)]\p \cale[y] \quad \red
\quad \cald[s]\p \calc [\termfal]\p \cale[t]
 \]
 \[ \cald[y] \p \calc [\distp{x}{y} (s \p t)]\p \cale[x] \quad \red
\quad \cald[t]\p \calc [\termfal]\p \cale[s]
 \] The first two reduction rules concern the case in which the
variables $x$ and $y$ occur in the same context. The last two address
the case in which the variables $x$ and $y$ occur in two different
contexts.

\subsection{Process Termination}
Classical logic is obtained by a combination of the classical linear implication introduction rule and the $\bot$-introduction rule:
\[\infer[\fal\I]{\Gamma \ergo \Delta , \termfal :\fal }{\Gamma \ergo
\Delta}\]
which introduces the terminated process $\termfal$, which does nothing.
Indeed, we can derive the linear excluded middle as follows:
\begin{prooftree}
\AxiomC{$x: A \ergo x: A$}
\UnaryInfC{$x:A \ergo x: A,\, \termfal: \fal$}
\UnaryInfC{$\ergo x: A,\, \send{x}\termfal: A\lo\fal$}
\UnaryInfC{$\ergo x \p \send{x}\termfal: A \pa (A\lo\fal)$}
\end{prooftree}
Since terms of type $\bot$ return no value, we have the typing rule
\[ \infer[\fal \E ]{\Gamma \ergo \abort(t), \Delta}{\Gamma \ergo  t: \fal , \Delta}
\]
which introduces the construct $\abort(t)$, whose intended meaning is to execute $t$ and then terminate the computation with no return value, hence it has no type.

\subsection{The reduction rules of $\lamp$}\label{sec:reductions}

The complete list of reduction rules of $\lamp$ is presented in
Table~\ref{tab:reductions} for the reader's convenience.  
\begin{table}[h!]  \centering
  \begin{footnotesize}
\hrule\vspace{-8pt}
\[(\lambda x\,u ) t\quad \red\quad  u[t/x]\]

\vspace{-8pt}
\[  \calc [\sendm{x}{s}{u}]  \p  \cald [x]   \quad
\red \quad \calc [s]  \p \cald [u]  \]
\[  \calc [x]  \p \cald [\sendm{x}{s}{u}]   \quad
\red\quad   \calc [u]  \p \cald [s]  \]

\vspace{-8pt}
\[\calc [\distp{x}{y} (s \p t)] \p \cald[x][y]  
 \quad \red \quad  \calc [\termfal]\p  \cald[s][t]
 \]
\[ \cald[x][y]\p  \calc [\distp{x}{y} (s \p t)]
 \quad \red \quad   \cald[s][t]\p \calc [\termfal]
 \] 
  \[ \cald[x]  \p \calc [\distp{x}{y} (s \p t)]\p  \cale[y]
 \quad \red \quad   \cald[s]\p  \calc [\termfal]\p \cale[t]  
 \]
 \[ \cald[y]  \p \calc [\distp{x}{y} (s \p t)]\p  \cale[x]
 \quad \red \quad   \cald[t]\p  \calc [\termfal]\p \cale[s]  
 \]
\hrule
\end{footnotesize}
\caption{Reduction rules for $\NLL$ terms.}
\label{tab:reductions}
\end{table}

\noindent As usual, for any context $\calc[\;]$, we adopt
the reduction scheme $\calc [t] \red \calc[u]$ whenever $t\red u
$. We denote by  $\red ^*$ the reflexive and transitive
closure of the one-step reduction $\red$. As usual, we shall employ the $\lambda$-calculus concepts of normal form and strong normalization.

\begin{definition}[Normal Forms and Strongly Normalizable Terms]\label{def:formnorm}\mbox{}
 \begin{itemize}
\item  A \textbf{redex} is a term $u$ such that $u\mapsto v$ for some
  $v$ and reduction in Table~\ref{tab:reductions}. 
A term $t$ is called a \textbf{normal form} or, simply, \textbf{normal}, 
if there is no $u$ such that $t\mapsto u$.
\item  
A finite or infinite sequence of terms
$u_1,u_2,\ldots,u_n,\ldots$ is said to be a \textbf{reduction} of $t$, if
$t=u_1$, and for all  $i$, $u_i \mapsto
u_{i+1}$.
 A  term $u$ of $\lamp$ is  \textbf{normalizable} if there is a finite reduction of $u$ whose last term is normal and is \textbf{strong normalizable} if every reduction of $u$ is finite. \end{itemize}\end{definition}

\section{Programming with $\lamp$}\label{sec:examples}

In order to acquire familiarity with the main features of $\lam$, we discuss now some programming examples. Namely, we shall see how to implement client-server communication, synchronization, cyclic communication patterns and channel sharing.

\begin{example}[Client-Server: Request/Answer]
In this example, we discuss how to represent in $\lamp$ a simple communication protocol, consisting in a request/answer message exchange. A server hosts an online catalogue, mapping product names to prices. A client transmits a string $\mathsf{prod}: \mathsf{String}$ to the server, representing a product name whose price the client wishes to know. The server answers the client request by sending back the price of the product $\mathsf{prod}$, computed by the function $\mathsf{cost}: \mathsf{String}\lo \mathbb{N}$.  A naive way to implement client and server would be
$$\overbrace{\sendm{x}{y}{\mathsf{prod}}}^{CLIENT}  \p  \overbrace{\sendm{y}{\termfal}{(\mathsf{cost}\, x)}}^{SERVER}$$
Although, as  expected
$$\sendm{x}{y}{\mathsf{prod}} \p  \sendm{y}{\termfal}{(\mathsf{cost}\, x)}
\ \mapsto {y} \p  \sendm{y}{\termfal}{(\mathsf{cost}\, \mathsf{prod})}\mapsto \  {y} \p  \sendm{y}{\termfal}{(\mathsf{price})}  \mapsto\ {\mathsf{price}} \p  {\termfal} $$
 we observe that, this way, synchronization is implemented poorly.  As we see below, the server may send to the client the message $(\mathsf{cost}\, x)$ before it receives any request whatsoever!
$$\sendm{x}{y}{\mathsf{prod}} \p  \sendm{y}{\termfal}{(\mathsf{cost}\, x)}
\ \mapsto  
\sendm{x}{\mathsf{cost}\, x}{\mathsf{prod}} \p {\termfal}
= \ (\lambda x\, \mathsf{cost}\, x){\mathsf{prod}} \p {\termfal}
\mapsto^{*}{\mathsf{price}} \p  {\termfal} $$
What we expect, instead, is that only the request of the client can trigger computation and transmission of the answer by the server. In order to force that, the trick  is to encapsulate the message $\mathsf{prod}$ into the $\lambda$-term $\lambda a \lambda b\, a\, (b\, \mathsf{prod})$, whose task is to take as input a server channel $a$, a continuation $b$ and apply the channel $a$ to $b\,\mathsf{prod}$. We thus implement client and server as follows:
$$\overbrace{\sendm{x}{y}{(\lambda a \lambda b\, a\, (b\, \mathsf{prod}))}}^{CLIENT} \p \overbrace{x\, (\send{y}{\termfal})\, \mathsf{cost}}^{SERVER}$$
Since  now the server output channel $\send{y}\termfal$ is not applied to any argument, the server has no choice but to wait for the client request:
\begin{align*}
&\sendm{x}{y}{(\lambda a \lambda b\, a\, (b\, \mathsf{prod}))} \p x\, (\send{y}{\termfal})\, \mathsf{cost}\\
&\mapsto {y} \p (\lambda a \lambda b\, a\, (b\, \mathsf{prod}))\, (\send{y}{\termfal})\, \mathsf{cost}\\
&\mapsto^{*} {y} \p  (\sendm{y}{\termfal}{(\mathsf{cost}\, \mathsf{prod})}\\
&\mapsto^{*} {y} \p  \sendm{y}{\termfal}{\mathsf{price}}\\
&\mapsto \mathsf{price} \p  {\termfal}\\
\end{align*}
\end{example}
By defining $X:=(\mathbb{N}\lo \fal)\lo  (\mathsf{String}\lo \mathbb{N})\lo \fal $, we can type the process above as follows:
\begin{footnotesize}
\begin{prooftree}
\AxiomC{$x:  X\ergo x:X $}
\AxiomC{$ \ergo y: \mathbb{N},\, \send{y} \termfal: \mathbb{N}\lo \fal$}
\BinaryInfC{$x:  X\ergo y: \mathbb{N},\, x\,(\send{y} \termfal): (\mathsf{String}\lo \mathbb{N})\lo \fal$}
\AxiomC{$\ergo \mathsf{cost}: \mathsf{String}\lo \mathbb{N}$}
\BinaryInfC{$x:X \ergo y: \mathbb{N},\, x\,(\send{y} \termfal)\,\mathsf{cost}:  \fal$}
\UnaryInfC{$\ergo \send{x}y: X\lo\mathbb{N},\, x\,(\send{y} \termfal)\, \mathsf{cost}:  \fal$}
\AxiomC{$  \ergo (\lambda a \lambda b\, a\, (b\, \mathsf{prod})): X $}
\def\defaultHypSeparation{\hskip .01in}
\BinaryInfC{$\ergo \sendm{x}{y}{(\lambda a \lambda b\, a\, (b\, \mathsf{prod}))}: \mathbb{N},\, x\,(\send{y} \termfal)\, \mathsf{cost}:  \fal$}
\UnaryInfC{$\ergo \sendm{x}{y}{(\lambda a \lambda b\, a\, (b\, \mathsf{prod}))}\p x\,(\send{y} \termfal)\, \mathsf{cost}:  \mathbb{N}\pa \fal$}
\end{prooftree}
\end{footnotesize}

\begin{example}[Client-Server: Dialogue]

We present now a continuation of the previous example which models a
longer interaction.  We want to represent the following transaction of
an online sale. As before, a buyer transmits to a seller a product
name $\mathsf{prod}:\mathsf{String}$, the seller computes by the
function $\mathsf{cost}$ the monetary cost $\mathsf{price}:\mathbb{N}$
of $\mathsf{prod}$ and communicates it to the buyer.  Then the buyer
apply a function $\mathsf{pay}: \mathbb{N}\rightarrow \mathsf{String}$
to $\mathsf{price}$, that will transmit to the server the credit card
number $\mathsf{card}: \mathsf{String}$ if the client wishes to buy
the product, the empty string otherwise. By comparison with the
previous example, we must add a new communication channel
$\overline{z}$ for the third communication, because in $\lamp$ each
communication requires a fresh channel. As before, in order to
implement synchronization, messages $M$ will be always transmitted as
$(\lambda a \lambda b\, a\, (b\, M))$:

$$\overbrace{\sendm{x}{y\, (\send{z}\termfal)\, \mathsf{pay} }{(\lambda a \lambda b\, a\, (b\, \mathsf{prod}))}}^{CLIENT} \p \overbrace{x\, (\send{y}{z})\, (\lambda c \lambda a' \lambda b'\, a'\, (b'\, \mathsf{cost}\, c))}^{SERVER}$$
We can observe the following  interaction:
\begin{align*}
&\sendm{x}{y\, (\send{z}\termfal)\, \mathsf{pay} }{(\lambda a \lambda b\, a\, (b\, \mathsf{prod}))} \p \, x\, (\send{y}{z})\, (\lambda c \lambda a' \lambda b'\, a'\, (b'\, \mathsf{cost}\, c))\\
&\mapsto {y\, (\send{z}\termfal)\, \mathsf{pay} } \p (\lambda a \lambda b\, a\, (b\, \mathsf{prod}))\, (\send{y}{z})\, (\lambda c \lambda a' \lambda b'\, a'\, (b'\, \mathsf{cost}\, c))\\
&\mapsto^{*} {y\, (\send{z}\termfal)\, \mathsf{pay} } \p  \sendm{y}{z}{((\lambda c \lambda a' \lambda b'\, a'\, (b'\, \mathsf{cost}\, c))\mathsf{prod})}\\
&\mapsto^{*} {y\, (\send{z}\termfal)\, \mathsf{pay} } \p  \sendm{y}{z}{( \lambda a' \lambda b'\, a'\, (b'\, \mathsf{price}))}\\
&\mapsto {( \lambda a' \lambda b'\, a'\, (b'\, \mathsf{price}))\, (\send{z}\termfal)\, \mathsf{pay} } \p  {z}\\
&\mapsto^{*} \sendm{z}{\termfal}{(\mathsf{pay}\, \mathsf{price} )} \p  {z}\\
&\mapsto \sendm{z}{\termfal}{\mathsf{card}} \p  {z}\\
&\mapsto {\termfal} \p  {\mathsf{card}}\\
\end{align*}
By defining\\ $X:=(Y\lo \mathsf{String})\lo (\mathsf{String}\lo Y)\lo \mathsf{String} $ \\ $Y:=( \mathsf{String}\lo \fal)\lo  (\mathbb{N}\lo \mathsf{String})\lo \fal $\\ we can type the process above as follows:

\begin{footnotesize}
\begin{prooftree}
\AxiomC{$x:  X\ergo x:X $}
\AxiomC{ $\ergo z: \mathsf{String},\, \send{z} \termfal: \mathsf{String}\lo \fal$}
\AxiomC{$y: Y\ergo y: Y$}
\BinaryInfC{$y: Y\ergo  y\, (\send{z} \termfal): (\mathbb{N}\lo \mathsf{String})\lo \fal, \,z: \mathsf{String}$}
\UnaryInfC{$\ergo  y\, (\send{z} \termfal): (\mathbb{N}\lo \mathsf{String})\lo \fal, \,\send{y}z: Y\lo\mathsf{String}$}
\AxiomC{$\ergo \mathsf{pay}: \mathbb{N}\lo \mathsf{String}$}
\def\defaultHypSeparation{\hskip .01in}
\BinaryInfC{$\ergo  y\, (\send{z} \termfal)\, \mathsf{pay}: \fal, \,\send{y}z: Y\lo\mathsf{String}$}
\BinaryInfC{$x:  X\ergo   y\, (\send{z} \termfal)\, \mathsf{pay}: \fal, \,x\, (\send{y}z):(\mathsf{String}\lo Y)\lo \mathsf{String}\qquad \ergo \lambda c \lambda a' \lambda b'\, a'\, (b'\, \mathsf{cost}\, c): \mathsf{String}\lo Y$}
\UnaryInfC{$x:X \ergo  y\, (\send{z} \termfal)\, \mathsf{pay}: \fal, \,x\, (\send{y}z)(\lambda c \lambda a' \lambda b'\, a'\, (b'\, \mathsf{cost}\, c)):\mathsf{String}$}
\UnaryInfC{$\ergo  \send{x}{y\, (\send{z} \termfal)\, \mathsf{pay}}: X\lo\fal, \,x\, (\send{y}z)(\lambda c \lambda a' \lambda b'\, a'\, (b'\, \mathsf{cost}\, c)):\mathsf{String}\qquad \ergo  \lambda a \lambda b\, a\, (b\, \mathsf{prod}): X$}
\UnaryInfC{$\ergo  \sendm{x}{y\, (\send{z} \termfal)\, \mathsf{pay}}{(\lambda a \lambda b\, a\, (b\, \mathsf{prod}))}: \fal,\, \,x\, (\send{y}z)(\lambda c \lambda a' \lambda b'\, a'\, (b'\, \mathsf{cost}\, c)):\mathsf{String}$}
\UnaryInfC{$\ergo  \sendm{x}{y\, (\send{z} \termfal)\, \mathsf{pay}}{(\lambda a \lambda b\, a\, (b\, \mathsf{prod}))} \p \,x\, (\send{y}z)(\lambda c \lambda a' \lambda b'\, a'\, (b'\, \mathsf{cost}\, c)):\fal \pa \mathsf{String}$}
\end{prooftree}
\end{footnotesize}
\end{example}

\begin{example}[Cyclic Communication]
Unlike in the session typed $\pi$-calculi in \cite{CairesPfenning} and \cite{Wadler2012},  in $\lamp$ one can  type in a natural way  cyclic communication patterns, as the following example shows. A client possesses some secret information $M: \mathsf{String}$, which it wishes to encrypt. For added security, the client desires two encryptions of the message. A couple of servers, by joining forces, offer this kind of double-encryption service. Therefore, the client sends $M$ to the first server, which encrypts $M$ by applying the function $\mathsf{enc}_{1}: \mathsf{String}\lo \mathsf{String} $ and then transmits the result to the second server, which in turn encrypts it by applying the function $\mathsf{enc}_{2}: \mathsf{String}\lo \mathsf{String} $ and finally transmits the result to the client. We implement client and servers as follows:

$$\overbrace{\sendm{x}{z}{M}}^{CLIENT} \p \overbrace{\sendm{y}{\termfal}{(\mathsf{enc}_{1}\, x)}}^{SERVER\ 1}\p \overbrace{\sendm{z}{\termfal}{(\mathsf{enc}_{2}\, y)}}^{SERVER\ 2}$$
We observe the following interaction:
\begin{align*}&\sendm{x}{z}{M} \p \sendm{y}{\termfal}{(\mathsf{enc}_{1}\, x)}\p \sendm{z}{\termfal}{(\mathsf{enc}_{2}\, y)}\\
&\mapsto z \p \sendm{y}{\termfal}{(\mathsf{enc}_{1}\, M)}\p \sendm{z}{\termfal}{(\mathsf{enc}_{2}\, y)}  \\ 
&\mapsto z \p \sendm{y}{\termfal}{M'}\p \sendm{z}{\termfal}{(\mathsf{enc}_{2}\, y)}  \\
&\mapsto z \p {\termfal}\p \sendm{z}{\termfal}{(\mathsf{enc}_{2}\, M')}  \\ 
&\mapsto z \p {\termfal}\p \sendm{z}{\termfal}{M''}  \\
&\mapsto M'' \p {\termfal}\p \termfal  
\end{align*}
We can type the process above as follows:
\begin{footnotesize}
\begin{prooftree}
\AxiomC{$ \ergo z: \mathsf{String},\, \send{z} \termfal: \mathsf{String}\lo \fal$}
\AxiomC{$y:  \mathsf{String}\ergo \termfal: \fal,\, \mathsf{enc_{2}}\,y: \mathsf{String} $}
\BinaryInfC{$y:  \mathsf{String}\ergo z: \mathsf{String},\, \termfal: \fal,\, \sendm{z}{ \termfal}{(\mathsf{enc_{2}}\,y)}: \fal$}
\UnaryInfC{$\ergo z: \mathsf{String},\,\send{y} \termfal: \mathsf{String}\lo\fal,\, \sendm{z}{ \termfal}{(\mathsf{enc_{2}}\,y)}: \fal $}
\AxiomC{$x: \mathsf{String} \ergo  \mathsf{enc_{1}}\,x: \mathsf{String} $}
\def\defaultHypSeparation{\hskip .01in}
\BinaryInfC{$x: \mathsf{String}\ergo z: \mathsf{String},\,\sendm{y}{\termfal}{(\mathsf{enc_{1}}\,x)}: \fal,\, \sendm{z}{ \termfal}{(\mathsf{enc_{2}}\,y)}: \fal $}
\UnaryInfC{$\ergo \send{x}z: \mathsf{String}\lo\mathsf{String},\,\sendm{y}{\termfal}{(\mathsf{enc_{1}}\,x)}: \fal,\, \sendm{z}{ \termfal}{(\mathsf{enc_{2}}\,y)}: \fal \qquad  \ergo M: 	\mathsf{String} $}
\UnaryInfC{$\ergo \sendm{x}{z}{M}: \mathsf{String},\,\sendm{y}{\termfal}{(\mathsf{enc_{1}}\,x)}: \fal,\, \sendm{z}{ \termfal}{(\mathsf{enc_{2}}\,y)}: \fal $}
\UnaryInfC{$\ergo \sendm{x}{z}{M} \p \sendm{y}{\termfal}{(\mathsf{enc_{1}}\,x)}:\mathsf{String}\pa \fal,\, \sendm{z}{ \termfal}{(\mathsf{enc_{2}}\,y)}: \fal $}
\UnaryInfC{$\ergo \sendm{x}{z}{M} \p \sendm{y}{\termfal}{(\mathsf{enc_{1}}\,x)}\p \sendm{z}{ \termfal}{(\mathsf{enc_{2}}\,y)}: (\mathsf{String}\pa \fal)\pa\fal $}
\end{prooftree}
\end{footnotesize}

%

\end{example}

\begin{example}[Channel Transmission]
As $\pi$-calculus, $\lamp$  supports communication of channel names and thus dynamic communication patterns, as we see in the following example. A server offers a printing service, which is hosted on another machine, connected to the server. In order to exploit the service an access code is required. A consumer, which wants to print the string $M$, sends its access code $\mathsf{access}: \mathsf{String}$ to the server, which checks it by the function $\mathsf{check}$. Upon success, the server transmits to the consumer the channel $\overline{z}$ along which the printer offers its services, so that finally the consumer can send $M$ to the printer. We model printer, server and client as follows:
$$\overbrace{z\, \mathsf{print}}^{PRINTER} \p \overbrace{(\mathsf{check}\, x)(\send{y}\termfal)(\send{z}\termfal)}^{SERVER}\p \overbrace{\sendm{x}{y\, (\lambda a\, a\, M)}{\mathsf{access}}}^{CONSUMER}$$
We observe the following interaction:
\begin{align*}
&{z\, \mathsf{print}} \p {(\mathsf{check}\, x)(\send{y}\termfal)(\send{z}\termfal)}\p {\sendm{x}{y\, (\lambda a\, a\, M)}{\mathsf{access}}}\\
&\mapsto z\, \mathsf{print} \p {(\mathsf{check}\, \mathsf{access})(\send{y}\termfal)(\send{z}\termfal)}\p y\, (\lambda a\, a\, M)\\
&\mapsto  z\, \mathsf{print} \p \sendm{y}{\termfal}{(\send{z}\termfal)}\p y\, (\lambda a\, a\, M)\\
&\mapsto  z\, \mathsf{print} \p {\termfal}\p \sendm{z}{\termfal}{(\lambda a\, a\, M)}\\
&\mapsto  (\lambda a\, a\, M)\, \mathsf{print} \p {\termfal}\p {\termfal}\\
&\mapsto   \mathsf{print}\, M \p {\termfal}\p {\termfal}
\end{align*}
By defining $C:=((\mathsf{String}\lo \fal)\lo \fal)\lo \fal$, we can type the above process as follows:

\begin{footnotesize}
\begin{prooftree}
\AxiomC{$ x: \mathsf{String}\ergo \mathsf{check}\, x: C\lo C\lo \fal$}
\AxiomC{$y: C\ergo \termfal: \fal, \, y\, (\lambda a\, a\, M): \fal $}
\UnaryInfC{$\ergo \send{y}\termfal: C, \, y\, (\lambda a\, a\, M): \fal $}
\BinaryInfC{$x: \mathsf{String}\ergo  (\mathsf{check}\, x)(\send{y}\termfal): C\lo \fal, \, y\, (\lambda a\, a\, M): \fal $}
\UnaryInfC{ $\ergo (\mathsf{check}\, x)(\send{y}\termfal): C\lo \fal, \, \send{x}y\, (\lambda a\, a\, M): \mathsf{String}\lo\fal $}
\AxiomC{$ \ergo  \mathsf{access}:\mathsf{String} $}
\def\defaultHypSeparation{\hskip .01in}
\BinaryInfC{$\ergo (\mathsf{check}\, x)(\send{y}\termfal): C\lo \fal, \, \sendm{x}{y\, (\lambda a\, a\, M)}{\mathsf{access}}: \fal \qquad  \ergo z\, \mathsf{print}:\fal, \send{z}\termfal: C $}
\UnaryInfC{$\ergo z\, \mathsf{print}:\fal,\, (\mathsf{check}\, x)(\send{y}\termfal)(\send{z}\termfal): \fal, \, \sendm{x}{y\, (\lambda a\, a\, M)}{\mathsf{access}}: \fal $}
\UnaryInfC{$\ergo z\, \mathsf{print} \p (\mathsf{check}\, x)(\send{y}\termfal)(\send{z}\termfal): \fal \pa\fal, \, \sendm{x}{y\, (\lambda a\, a\, M)}{\mathsf{access}}: \fal $}
\UnaryInfC{$\ergo z\, \mathsf{print} \p (\mathsf{check}\, x)(\send{y}\termfal)(\send{z}\termfal)\p \sendm{x}{y\, (\lambda a\, a\, M)}{\mathsf{access}}: (\fal\pa\fal)\pa\fal $}
\end{prooftree}
\end{footnotesize}
\end{example}

\section{Intermezzo: Synchronous Communication in $\lamp$}
\label{sec:synchrony}

In $\pi$-calculus the actions of both sending and receiving messages are \emph{synchronous} and \emph{blocking}. They are synchronous, because they 
require that the  sender and the receiver synchronize;  they are \emph{blocking}, because the execution of both the sender and the receiver is blocked until the message is actually transmitted. 
Therefore, on one hand, if there is no receiver listening to the channel, a process can neither transmit its message along the channel nor proceed with its execution. On the other hand, 
a process which does listen cannot proceed with its execution until it receives the message it is waiting for. In $\pi$-calculus, synchronicity is implemented by a construct for sending $\overline{x}m.\,\_$ and a construct for receiving  $x(y).\,\_$
which need to be outermost in the connected processes in order to activate the reduction:
$$\overline{x}m.P \ | \ x(y).Q\ \mapsto\ P \ | \ Q[m/y]$$
The blocking nature of the actions is implemented by forbidding to reduce inside $P$ and $Q$ before the actions are executed, that is, by forcing a reduction strategy. 

Non-blocking actions have been advocated by various authors (e.g. \cite{Merro}, \cite{KMP2019}) and are necessary for a full correspondence between cut-elimination and process execution. In $\lamp$, the fact that sending and receiving are non-blocking is implemented by just removing the input construct $x(y)$, so in $\lamp$ we have :
$$\overline{x}m.P \ |\ Q \mapsto P \ | \ Q[m/x]$$
provided that $x$ occurs in $Q$.

\subsection{The call-by-value $\lamp$}

Although $\lamp$ is naturally asynchronous, we can   model synchronous and blocking actions also in $\lamp$, by defining reduction to automatically follow a call-by-value discipline.

\begin{definition}[Value] A \emph{value} is any term of one of the
following forms: \[\send{x} t 
\qquad\qquad\qquad s\p t \]\end{definition}

\begin{definition}[Call-by-value Contexts]
We define a \textbf{call-by-value evaluation context} by the following grammar: 
$$E\;\;:=\;\; [\;] \;\;\p\;\; E\, u \;\;\p\;\; V\, E \;\;\p\;\; \distp{x}{y}{E} \;\;\p\;\; \abort(E) $$
where $V$ is any value. 
\end{definition}

In order to obtain a call-by-value version of $\lamp$, we rewrite the main reduction rules of $\lamp$ as following:

\[\ldots\p E[(\lambda x\,u ) V]\p \ldots \mapsto^{cbv} \ldots \p E[u[V/x]]\p \ldots\]
\[ \dots\p E [\sendm{x}{s}{V}]  \p \ldots \p E'[x]\p \dots   \quad
\red^{cbv} \quad\dots\p E[s]  \p \ldots \p E'[V]\p\dots  \]
\[ \dots\p E'[x]  \p \ldots \p E [\sendm{x}{s}{V}]\p\dots   \quad
\red^{cbv} \dots\p E'[V]  \p \ldots \p E[s]\p\dots  \]
where $V$ is a value and $E, E'$ call-by-value evaluation contexts. The idea is that call-by-value contexts bring back a notion  of order among operations, which can be used to represent the sequential facet of computation. Thus a term $E'[x]$ represents a process whose next operation is a read operation on the channel $x$, followed by the operations contained in $E'[\;]$. Similarly to Wadler's principal-cut-elimination strategy for CP, the $\mapsto^{cbv}$ reduction acts only on the top level parallel operators.

We can now define in $\lamp$ the synchronous input-construct as follows:
$$x(y).u := (\lambda y\, u)\, x$$
Indeed, the term $x(y).u$ is always stuck: since $u$ is located under a $\lambda$, no reduction can be performed inside $u$; since $x$ is not a value,  the redex $(\lambda y\, u)\, x$ cannot be contracted either.
For similar reasons, the term $\sendm{x}{u}{V}= (\send{x}u)V$ is  stuck as well: since $V$ is a value, no reduction can be performed inside $V$; since $u$ is located under an output operator, no reduction can be performed inside $u$ either. Thus, the only reduction that could be fired is the transmission of $V$ along the channel $x$, which however requires a suitable receiver, namely a process whose call-by-value evaluation cannot further proceed: the process requires an input value and waits for it. Indeed, in the following  configuration, we can reduce
$$\sendm{x}{u}{V} \p  x(y).w\ =\ \sendm{x}{u}{V}\p (\lambda y\, w) x\ \mapsto^{cbv} u\p (\lambda y\, w) V\ \mapsto^{cbv}\ u\p w[V/x]$$ 
In the configuration
$$\sendm{x}{u}{V} \p (\lambda z\, z\, x(y).w ) (\lambda k\, k)$$
however,
the process $\sendm{x}{u}{V}$ cannot transmit its message $V$ until $(\lambda z\, z\, x(y).w ) (\lambda k\, k)$ is ready to receive, which will happen as soon as its head redex will be contracted and the value of $x(y).w$ will be really needed. Thus, after one reduction step, we have
$$\sendm{x}{u}{V} \p (\lambda k\, k)\, x(y).w\ \mapsto^{cbv} u \p (\lambda k\, k)((\lambda y\, w)\,V)\ \mapsto^{cbv}\   u \p (\lambda k\, k)\, w[V/x] \ \mapsto^{cbv}\   u \p w[V/x]$$

\section{Soundness and Completeness} \label{sec:soundness-completeness}

We show now that the typing system $\NLL$ of $\lamp$  captures exactly classical multiplicative linear logic. Namely, we show that $\NLL$ is equivalent to the sequent calculus $\SLL$
for classical multiplicative linear logic. The cut
rule and the logical rules of $\SLL $ -- namely, the rules introducing
the logical connectives of linear logic on the left or on the right --
are the standard double-sided sequent rules for classical linear
logic. The initial sequent $\fal \ergo $ corresponds to the initial
sequent $\ergo 1$. We use the former since we do not have an explicit
duality operator. The only non-standard rule of $\NLL$ is $\fal
\E$. It is clearly sound since it captures the neutrality of $\fal$
with respect to $\pa$, and hence to the comma on the right-hand side  of sequents. As we will see in the completeness proof, the rule $\fal \E$
is key in handling the occurrences of  $\fal$ introduced by the rules
$\pa\E$. 

\begin{table}[h!]
  \centering
  \begin{center}
  \hrule


\[\vcenter{\infer[\fal]{\Gamma \ergo \Delta ,
\fal }{\Gamma \ergo \Delta}} 
\qquad\qquad\qquad 
  \fal \ergo \]

  \[ 
\infer[\lo l]{ \Gamma , \Sigma , A \lo B \ergo \Delta , \Theta }{\Gamma  \ergo A,  \Delta &&  \Sigma ,  B \ergo \Theta} \qquad \qquad\quad \infer[\lo r]{ \Gamma \ergo A \lo B , \Delta }{\Gamma , A \ergo B  , \Delta  }
\]


\[
\infer[\pa l ]{  \Gamma ,  \Sigma , A\pa B \ergo 
      \Delta , \Theta}{\Gamma , A \ergo  \Delta  && \Sigma , B \ergo  \Theta} 
\qquad\qquad\qquad
  \infer[\pa r ]{ \Gamma \ergo A\pa B ,  \Delta}{\Gamma \ergo A ,  B ,  \Delta}
\]

\[\infer[\cut]{\Gamma , \Sigma \ergo \Delta , \Theta}{\Gamma \ergo A,  \Delta &&
\Sigma , A \ergo \Theta}\]

\hrule
\end{center}

\caption{The sequent calculus $\SLL$. }
\label{tab:sequent-calculus}
\end{table}

\begin{proposition}[Soundness] \label{proof:sound} For any sequent $S=
\Gamma \ergo \Delta $, if $S$ is derivable in $\NLL$, then $S$ is
derivable in $\SLL$.
\end{proposition}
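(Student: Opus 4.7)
The plan is to proceed by induction on the $\NLL$ derivation of $S = \Gamma \ergo \Delta$, where the underlying sequent (obtained by erasing term annotations) is the object to be derived in $\SLL$. Since the $\NLL$ rules fall naturally into introduction rules (which mirror right-rules of $\SLL$) and elimination rules (which need to be simulated via $\cut$ together with a left-rule), the bulk of the argument is a routine case analysis; I briefly indicate how each rule is handled.

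For the base case $x:A \ergo x:A$, the corresponding sequent $A \ergo A$ is derivable in $\SLL$ (the identity sequent is derivable from the usual axioms, or available directly as is standard in multiplicative linear logic). The two $\lo\I$ rules both translate to a single application of $\lo r$, and $\pa\I$ translates directly to $\pa r$. The rule $\fal\I$ translates directly to the rule labeled $\fal$ in Table~\ref{tab:sequent-calculus}. These cases are immediate from the induction hypothesis.

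The interesting cases are the elimination rules, where $\cut$ does the work. For $\lo\E$, from the induction hypothesis I have $\SLL$-derivations of $\Gamma \ergo A\lo B, \Delta$ and $\Sigma \ergo A, \Theta$. I apply $\lo l$ using the second premise together with the identity sequent $B \ergo B$ to obtain $\Sigma, A\lo B \ergo B, \Theta$, and then $\cut$ with the first premise yields $\Gamma, \Sigma \ergo B, \Delta, \Theta$. For $\pa\E$, given $\SLL$-derivations of $\Gamma \ergo A\pa B, \Delta$, $\Sigma_1, A \ergo \Theta_1$, and $\Sigma_2, B \ergo \Theta_2$, I apply $\pa l$ to the last two and then $\cut$ with the first to get $\Gamma,\Sigma_1,\Sigma_2 \ergo \Delta, \Theta_1, \Theta_2$; a single application of the $\fal$ rule then adds $\fal$ on the right, matching the conclusion. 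For $\fal\E$, from $\Gamma \ergo \fal, \Delta$ I $\cut$ against the initial sequent $\fal \ergo$ to obtain $\Gamma \ergo \Delta$, which is the required sequent (the term annotation $\abort(t)$ is discarded in the erasure).

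There is no real obstacle: the proof is entirely mechanical once one observes that every elimination rule of $\NLL$ can be decomposed into the corresponding $\SLL$ left-rule composed with a $\cut$. The only mild subtlety is keeping track of the $\fal$ introduced by $\pa\E$ and eliminated by $\fal\E$, both of which are handled cleanly by the $\fal$-rule and the initial sequent $\fal \ergo$ respectively; this confirms the remark in the excerpt that $\fal\E$ captures the neutrality of $\fal$ with respect to $\pa$.
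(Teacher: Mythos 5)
Your proof is correct and follows essentially the same route as the paper: induction on the $\NLL$ derivation, translating each introduction rule to the corresponding right rule of $\SLL$ and simulating each elimination rule by the corresponding left rule composed with $\cut$, with $\fal\E$ handled by a cut against the initial sequent $\fal \ergo$ and the extra $\fal$ from $\pa\E$ added by the $\fal$ rule. The only (inessential) difference is in the $\lo\E$ case, where you use a single cut by feeding the premise $\Sigma \ergo A, \Theta$ directly into $\lo l$, whereas the paper first derives $A, A\lo B \ergo B$ and then uses two cuts.
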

\begin{proof} We proceed by induction on the number of rule
applications in the $\NLL$ derivation of $S$. Since the term assignment of the logical inferences is irrelevant for the argument, we ignore it in the present proof. 

If no rule is applied in the derivation, then $S= A \ergo A$, which is
an initial sequents of $\SLL$.

Suppose now that the statement holds for any $\NLL$ derivation
containing $n$ or less rule applications, we show the result for a
generic $\NLL$ derivation containing $n+1$ rule applications. We reason
by cases on the last rule applied in this $\NLL$ derivation.
\begin{itemize}

\item $\vcenter{\infer[\fal\I]{\Gamma \ergo \Delta , \fal }{\Gamma
\ergo \Delta}}$. Since the derivation of the premise of the rule
contain $n$ or less rule applications, by inductive hypothesis we know
that $\Gamma \ergo \Delta $ is derivable in $\SLL$. Then by the same
application of the $\fal$ rule we can derive $\Gamma \ergo \Delta ,
\fal$ also in $\SLL$.

\item $\vcenter{\infer[\fal\E]{\Gamma \ergo \Delta }{\Gamma
\ergo \Delta, \fal}}$. Since the derivation of the premise of the rule
contain $n$ or less rule applications, by inductive hypothesis we know
that $\Gamma \ergo \Delta, \fal $ is derivable in $\SLL$. Then by 
 $$ \vcenter{\infer[\cut]{\Gamma  \ergo \Delta }{\Gamma \ergo \Delta, \fal && \fal \ergo }}$$
we can derive $\Gamma \ergo \Delta$ also in $\SLL$.

\item $ \vcenter{ \infer[\lo \E]{ \Gamma , \Sigma \ergo B , \Delta ,
\Theta }{\Gamma \ergo A\lo B, \Delta && \Sigma \ergo A, \Theta}
}$. Since the derivations of the premises of the rule contain $n$ or
less rule applications, by inductive hypothesis we know that $\Gamma
\ergo A\lo B, \Delta$ and $ \Sigma \ergo A, \Theta$ are derivable in
$\SLL$. By
\[ \infer[\cut]{\Gamma , \Sigma \ergo B , \Delta ,
\Theta}{\Sigma \ergo A, \Theta && \infer[\cut]{\Gamma , A 
\ergo B, \Delta }{\Gamma
\ergo A\lo B, \Delta &&  \infer[\lo l]{A , A\lo B \ergo B}{ A  \ergo A &&  B\ergo B } }} \] we can derive $\Gamma , \Sigma \ergo B , \Delta ,
\Theta$ in $\SLL$.

\item $ \vcenter{ \infer[\lo \I]{ \Gamma \ergo A \lo B , \Delta
}{\Gamma , A \ergo B , \Delta }}$. Since the derivation of the
premise of the rule contain $n$ or less rule applications, by
inductive hypothesis we know that $\Gamma , A \ergo B , \Delta$ is
derivable in $\SLL$. By the rule application\[\infer[\lo r]{ \Gamma \ergo A \lo B , \Delta
}{\Gamma , A \ergo B , \Delta }\]we can derive $ \Gamma \ergo A \lo B , \Delta$ in $\SLL$.

%
%
%

\item $ \vcenter{ \infer[\pa \E]{ \Gamma ,\Sigma_1  ,\Sigma_2  \ergo \fal , \Delta , \Theta_1 , \Theta_2 }{\Gamma \ergo A\pa B , \Delta  
&& \Sigma _1   , A \ergo \Theta _1 && \Sigma _2   , B \ergo \Theta _2
}}$.  Since the derivations of the premises of the rule contain
$n$ or less rule applications, by inductive hypothesis we know that
$\Gamma \ergo A\pa B , \Delta  
$ and $ \Sigma _1   , A \ergo \Theta _1 $ and $  \Sigma _2   , B \ergo \Theta _2 $ are
derivable in $\SLL$. By the rule application
\[\infer[\fal]{ \Gamma ,\Sigma_1  ,\Sigma_2  \ergo \fal , \Delta , \Theta_1 , \Theta_2}{\infer[\cut]{\Gamma ,\Sigma_1  ,\Sigma_2  \ergo \Delta , \Theta_1 , \Theta_2}{\Gamma \ergo  A\pa B , \Delta && \infer[\pa l ]{ \Sigma_1  ,\Sigma_2 , A \pa B  \ergo \Theta_1 , \Theta_2 }{ \Sigma_1  , A \ergo \Theta _2 && \Sigma _2 , B \ergo \Theta _2} 
 }}\]we can
derive $ \Gamma ,\Sigma_1  ,\Sigma_2  \ergo \bot, \Delta , \Theta_1 , \Theta_2$ in $\SLL$.

\item $ \vcenter{ \infer[\pa \I ]{ \Gamma \ergo A\pa B , \Delta}{\Gamma
\ergo A , B , \Delta}}$. Since the derivation of the premise of the
rule contain $n$ or less rule applications, by inductive hypothesis we
know that $\Gamma , A \ergo A, B, \Delta $ is derivable in $\SLL$.  By the rule application\[\infer[\pa r ] { \Gamma \ergo A\pa B ,
\Delta}{\Gamma \ergo A , B , \Delta}\]we can derive $ \Gamma \ergo
A\pa B , \Delta$ in $\SLL$.
\end{itemize}
\end{proof}


The natural deduction calculus $\NLL$ corresponds very neatly to the
sequent calculus $\SLL$. Hence, a completeness proof of $\NLL$ with
respect to $\SLL$ is quite straightforward. The proof deviates from
standard reasoning only insofar as we need to handle the occurrences
of $\fal$ introduced by $\pa\E$. 
We counterbalance this
 introductions of $\fal$ by applying the rule $\fal \E$.

\begin{proposition}[Completeness] \label{proof:complete} For any
sequent $S= \Gamma \ergo \Delta $, if $S$ is derivable in $\SLL$, then
$\Gamma \ergo \Delta  $ is derivable in $\NLL$.
\end{proposition}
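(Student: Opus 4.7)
My plan is to proceed by induction on the number of rule applications in the $\SLL$ derivation of $S = \Gamma \ergo \Delta$, showing that each $\SLL$ rule can be mirrored by a short pattern of $\NLL$ inferences. Since the term assignment plays no role in the statement, I will only track types, the associated terms being read off mechanically from the $\NLL$ rules actually used.

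First I would dispose of the base cases. The $\SLL$ axiom $A \ergo A$ is already the $\NLL$ axiom. The initial sequent $\fal \ergo$ is simulated by applying $\fal\E$ to $x:\fal \ergo x:\fal$, yielding $x:\fal \ergo \abort(x)$, so that the ``empty'' right-hand side is represented by a typeless $\abort$-term. The right rules of $\SLL$ translate almost verbatim: the $\fal$-right rule mirrors $\fal\I$, $\pa r$ mirrors $\pa\I$, and $\lo r$ mirrors one of the two $\lo\I$ variants, the choice being dictated by where the discharged variable occurs. By the linearity of variables enforced by $\NLL$ (Proposition~\ref{prop:linchan}), it must appear either in the subterm of type $B$ or in some term of $\Delta$, so exactly one of the two $\lo\I$ rules applies.

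The only nontrivial part concerns the left rules and the cut rule. For $\pa l$ I would plug the two induction hypotheses into $\pa\E$ applied to the axiom $z:A\pa B \ergo z:A\pa B$; this produces a spurious $\distp{x}{y}\,z:\fal$ on the right, which I would immediately discard via $\fal\E$, producing the typeless $\abort(\distp{x}{y}\,z)$ that matches the empty contribution of $\pa l$. For $\lo l$ I would first form, via $\lo\E$ applied to the axiom $f:A\lo B \ergo f:A\lo B$ and the first induction hypothesis, a derivation of $f:A\lo B,\Gamma \ergo fs:B,\Delta$, and then compose it with the second induction hypothesis $\Sigma,y:B \ergo \Theta$ by the same $\lo\I + \lo\E$ pattern that simulates $\cut$: apply $\lo\I$ to the second premise to obtain $\Sigma \ergo \send{y}t:B\lo C,\Theta'$ for some pivot type $C$ remaining in $\Theta$, and then apply $\lo\E$ to get the desired conclusion.

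I expect the main obstacle to be the bookkeeping around this cut simulation when the right-hand side of a premise is empty in $\SLL$: the $\lo\I + \lo\E$ trick needs some pivot type $C$ and the $\lo\I$ side condition needs the cut variable to occur on the right. To make this go through I would strengthen the induction hypothesis, proving that every $\SLL$-derivable $\Gamma \ergo A_1,\ldots,A_n$ admits an $\NLL$ derivation of $\Gamma \ergo t_1:A_1,\ldots,t_n:A_n$ possibly augmented with some additional $\fal$-typed side terms that have not yet been collapsed by $\fal\E$. With this convention the pivot variable is always guaranteed to appear in some exposed typed term on the right, satisfying the side condition of the appropriate $\lo\I$ variant; whenever the pivot is still missing, a fresh $\termfal:\fal$ can be grafted on via $\fal\I$ beforehand and absorbed via $\fal\E$ at the end. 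This is precisely the role announced in the preamble for $\fal\E$, namely to balance the $\fal$ introduced by $\pa\E$ and by the preparatory $\fal\I$ applications.
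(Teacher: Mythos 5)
Your plan matches the paper's own proof essentially step for step: induction on the $\SLL$ derivation, right rules translated verbatim, $\pa l$ simulated by $\pa\E$ applied to an axiom followed by $\fal\E$, and $\lo l$ and $\cut$ simulated by the $\fal\I$--$\lo\I$--$\lo\E$--$\fal\E$ pattern, with linearity guaranteeing that one of the two $\lo\I$ variants is applicable. The only cosmetic difference is that the paper always takes the freshly introduced $\termfal:\fal$ as the pivot type for the cut simulation, so the strengthened induction hypothesis you propose is not needed.
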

\begin{proof} We first remark that the restriction on $\NLL$ rule
applications can always be bypassed by renaming the variables in the
premises, so that the set of variables used in each premise is disjoint from the set of variables used in any other premise. Therefore, we ignore the term assignment also in the present proof. We proceed by
induction on the number of rule applications in the $\SLL$ derivation
of $S$.


If no rule is applied in the derivation, then either $S=\; A \ergo A$,
which is an initial sequent of $\NLL$, or $S= \fal \ergo \; $. Since in $\NLL$ we can
derive 
\[\infer[\fal\E]{\fal \ergo }{\fal\ergo \fal }\]the claim holds.

Suppose now that the statement holds for any $\SLL$ derivation
containing $n$ or less rule applications, we show the result for a
generic $\SLL$ derivation containing $n+1$ rule applications. We reason
by cases on the last rule applied in this $\SLL$ derivation.
\begin{itemize}

\item $\vcenter{\infer[\fal]{\Gamma \ergo \Delta , \fal }{\Gamma \ergo
\Delta}}$. Since the
derivation of the premise of the rule contain $n$ or less rule
applications, by inductive hypothesis we know that $\Gamma \ergo
\Delta $ is derivable in $\NLL$. Then by
the same application of the $\fal$ rule we can derive $\Gamma \ergo \Delta $ also in $\NLL$.

\item $ \vcenter{ \infer[\lo l]{ \Gamma , \Sigma , A \lo B \ergo
\Delta , \Theta }{\Gamma \ergo A, \Delta && \Sigma , B \ergo \Theta}
}$. Since the derivations of the premises of the rule contain $n$ or
less rule applications, by inductive hypothesis we know that $\Gamma
\ergo A, \Delta $ and $ \Sigma , B \ergo \Theta $ are derivable in
$\NLL$. 
By
\begin{footnotesize}
  \[\infer[\bot E]{\Gamma , \Sigma , A \lo B \ergo
\Delta , \Theta}{\infer[\lo \E]{\Gamma , \Sigma , A\lo B \ergo \Delta, \bot, \Theta 
      }{\infer[\lo \I]{\Sigma \ergo B\lo \bot,
        \Theta  }{\infer[\bot I]{\Sigma , B \ergo  \bot, \Theta}{\Sigma , B \ergo  \Theta} 
        }&& \infer[\lo \E]{\Gamma , A\lo B \ergo
        B, \Delta }{ A\lo B \ergo A\lo B &&
        \Gamma \ergo A, \Delta  }}}\]
\end{footnotesize}we can derive 
$ \Gamma , \Sigma , A\lo B \ergo  \Delta, \Theta $
in $\NLL$.

\item $ \vcenter{ \infer[\lo r]{ \Gamma \ergo A \lo B , \Delta
}{\Gamma , A \ergo B , \Delta }}$. Since the derivation of the
premise of the rule contain $n$ or less rule applications, by
inductive hypothesis we know that $\Gamma , A \ergo B , \Delta$ is
derivable in $\NLL$. By the rule application\[\infer[\lo \I]{ \Gamma \ergo A \lo B , \Delta 
}{\Gamma , A \ergo B , \Delta  }\]we can derive $ \Gamma \ergo A \lo B , \Delta$ in $\NLL$.

%

\item $ \vcenter{ 
\infer[\pa l ]{  \Gamma ,  \Sigma , A\pa B \ergo 
      \Delta , \Theta}{\Gamma , A \ergo  \Delta  && \Sigma , B \ergo  \Theta} }$. 
Since the derivations of the premises of the rule contain $n$ or
less rule applications, by inductive hypothesis we know that $\Gamma , A
\ergo  \Delta $ and $ \Sigma , B \ergo \Theta  $ are derivable in
$\NLL$. By the rule application
\begin{small}
  \[ \infer[\bot E]{\Gamma ,\Sigma, A\pa B \ergo  \Delta ,
      \Theta }{\infer[\pa \E]{ \Gamma ,\Sigma, A\pa B \ergo \fal, \Delta ,
      \Theta }{A\pa B \ergo A\pa B  && \Gamma , A \ergo \Delta
      && \Sigma , B \ergo \Theta }}\]
\end{small}we can derive $ \Gamma ,\Sigma,  A\pa B \ergo \Delta , \Theta $ in $\NLL$.

\item $ \vcenter{ \infer[\pa r ]{ \Gamma \ergo A\pa B , \Delta}{\Gamma
\ergo A , B , \Delta}}$. Since the derivation of the premise of the
rule contain $n$ or less rule applications, by inductive hypothesis we
know that $\Gamma  \ergo A, B, \Delta  $ is
derivable in $\NLL$.  By a rule application\[\infer[\pa \I ] { \Gamma
\ergo A\pa B , \Delta }{\Gamma \ergo A , B ,
\Delta }\]we can derive $ \Gamma \ergo A\pa B ,
\Delta$ in $\NLL$.

\item $\vcenter{\infer[\cut]{\Gamma , \Sigma \ergo \Delta ,
\Theta}{\Gamma \ergo A, \Delta && \Sigma , A \ergo \Theta}}$. Since
the derivations of the premises of the rule contain $n$ or less rule
applications, by inductive hypothesis we know that $\Gamma \ergo A,
\Delta $ and $ \Sigma , A \ergo \Theta $ are derivable in $\NLL$. By
\[\infer[\bot E]{\Gamma , \Sigma \ergo \Delta,  \Theta }{\infer[\lo \E]{\Gamma , \Sigma \ergo \Delta , \bot , \Theta  }{\infer[\lo \I ]{\Sigma \ergo A \lo \bot , \Theta  }{\infer[\bot I]{\Sigma , A \ergo \bot , \Theta}{\Sigma , A \ergo  \Theta} } &&
\Gamma \ergo A, \Delta  }}\] we can derive $
\Gamma , \Sigma \ergo \Delta , \Theta $ in
$\NLL$.
\end{itemize}\end{proof} 


\section{Subject Reduction}\label{sec:subject-reduction}

We are now going to prove that the reductions of $\lamp$ preserve the
typing of terms, property well-known under the name of Subject Reduction. 

We first need the concept of substitution, which in $\lamp$ is just a replacement of some variables, with no renaming involved.
\begin{definition}[Substitution]\mbox{}
 For any multiset of typed terms $\Delta$ and  terms $v_1 , \ldots ,
v_n$, we denote by $\Delta [v_1/x_1 \;\; \ldots \;\;v_n/x_n]$ the
simultaneous  replacement, for any $i\in \{1, \ldots ,n\}$, of all occurrences of
$x_i$ in all terms of $\Delta$ by $v_i$.
Given a substitution $[t/x]$, we refer to $t$ as
\emph{the substituting term} and to $x$ as \emph{the substituted
variable}.
\end{definition}

In order to prove Subject Reduction, we first prove that if we have a
variable $x:A$ in the context of a derivable sequent $\Gamma , x:A
\ergo \Delta $ and we can type a term $t:A$ by deriving a sequent
$\Sigma \ergo t:A , \Theta $, then we can derive the sequent $\Gamma ,
\Sigma \ergo \Delta[t/x] , \Theta $ effectively replacing the variable
$x:A$ with the term $t:A$. The proof-theoretical intuition is the
following: if $A$ is among the assumptions of a proof, and we have a
derivation $t$ of $A$, then we can use $t$ to derive $A$ directly
inside the proof.

\begin{lemma}[Substitution]\label{lem:derivation-substitution} If
$\Sigma \ergo t:A , \Theta $ and $\Gamma , x:A \ergo \Delta $ are
derivable in $\NLL$ and share no variable, then $\Gamma , \Sigma \ergo
\Delta[t/x] , \Theta $ is derivable in $\NLL$ as well.
\end{lemma}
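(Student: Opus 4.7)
The plan is to proceed by induction on the derivation $\pi$ of $\Gamma, x:A \ergo \Delta$, holding the derivation of $\Sigma \ergo t:A, \Theta$ fixed. By the linearity enforced by the side condition on $\NLL$ rules (as reflected in Proposition~\ref{prop:linchan}), the declared variable $x$ is used exactly once, so $x$ appears in exactly one term of $\Delta$ and $\Delta[t/x]$ simply replaces that single occurrence.

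The base case is when $\pi$ is the axiom $x:A \ergo x:A$: then $\Gamma$ is empty, $\Delta[t/x]$ is $t:A$, and the conclusion coincides with the assumed derivation $\Sigma \ergo t:A, \Theta$.

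For the inductive step I case-split on the last rule of $\pi$. In each case the variable $x:A$ belongs to the context of exactly one premise; I apply the induction hypothesis to that premise together with the derivation of $\Sigma \ergo t:A, \Theta$, obtaining a derivation of the correspondingly substituted sequent, and then reapply the same rule. Two kinds of side conditions must be checked. First, the variable-sharing restriction on rules that split the context (notably $\lo \E$ and $\pa \E$) is preserved: the sibling premises are untouched, and the premise receiving the substitution now carries variables from $\Gamma$ and $\Sigma$, which are disjoint from the variables of the other premises by hypothesis, since $\Sigma$ shares no variable with $\Gamma, x:A \ergo \Delta$. Second, the occurrence side conditions on $\lo \I$ (requiring that the bound variable $y$ appear either in the body or in $\Delta$) concern the variable introduced by the rule, which is distinct from $x$ and does not occur free in $t$; since the substitution only removes occurrences of $x$ and splices in $t$, those conditions are undisturbed.

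I expect the main obstacle to be keeping careful track of which premise of a multi-premise rule hosts the occurrence of $x:A$ that propagates to the conclusion --- through the function or argument branch of $\lo \E$, or any of the three branches of $\pa \E$ --- and correctly updating both contexts and subject terms in each sub-case. Otherwise the argument is a routine syntactic induction, analogous to the standard substitution lemma for natural deduction, with no surprises arising from the multiple-conclusion or communication-oriented features of $\NLL$.
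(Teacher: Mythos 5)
Your proposal is correct and follows essentially the same route as the paper: induction on the derivation of $\Gamma, x:A \ergo \Delta$, a case split on the last rule, applying the inductive hypothesis to the unique premise hosting $x:A$, and reapplying the rule while checking that the variable-disjointness side conditions and the occurrence conditions on $\lo\I$ are preserved. The points you flag as needing care (disjointness of $\Sigma, t, \Theta$ from the sibling premises, and $y \neq x$ with $y$ not occurring in $t$) are exactly the ones the paper verifies.
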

\begin{proof} We proceed by induction on the number of rule
applications in the derivation of $\Gamma , x:A \ergo \Delta $.

If no rule is applied in the derivation, then $ \Gamma , x:A \ergo
\Delta $ is of the form $x: A \ergo x: A$. Therefore, $\Gamma , \Sigma \ergo
\Delta[t/x], \Theta $ is just $\Sigma \ergo t:A, \Theta $ and the claim
trivially holds.

Suppose now that the statement holds for any sequent which is
derivable with $n$ or less rule applications, we show the result for
any sequent $\Gamma , x:A \ergo \Delta $ which is derivable using
$n+1$ rule applications. We reason by cases on the last rule applied
in this derivation of $\Gamma , x:A \ergo \Delta $.
\begin{itemize}

\item $\vcenter{\infer[\fal\I]{\Gamma , x:A \ergo \Delta' , \termfal :
\fal }{\Gamma , x:A \ergo \Delta'}}$ where $\Delta = \Delta' ,
\termfal : \fal$.  By inductive hypothesis $\Gamma , \Sigma \ergo
\Delta'[t/x], \Theta $ is derivable.  By applying
\[\infer[\fal\I]{\Gamma , \Sigma \ergo \Delta'[t/x], \Theta,
\termfal:\fal }{\Gamma , \Sigma \ergo  \Delta'[t/x], \Theta}\] we
obtain a derivation of $\Gamma , \Sigma \ergo \Delta[t/x] , \Theta $,
which verifies the claim.

\item $ \vcenter{ \infer[\lo \E]{ \Gamma_1 , \Gamma_2 \ergo vw:C ,
\Delta_1, \Delta_2 }{\Gamma_1 \ergo v: B\lo C, \Delta_1 && \Gamma_2
\ergo w: B, \Delta_2} }$ where $\Delta = vw:C , \Delta_1,
\Delta_2$. Suppose that $x:A$ is contained in $\Gamma_1$ and hence
that $\Gamma _1 = \Gamma _1 ' ,x:A $ and that $\Gamma = \Gamma _1 ',
\Gamma _2 $. The case in which $x:A$ is contained in $\Gamma _2$ is
analogous. By inductive hypothesis $\Gamma_1' , \Sigma \ergo v[t/x]:
B\lo C, \Delta_1[t/x] , \Theta$ is derivable. Now, due to the type
assignment rules of $\NLL$ and since $x:A$ is
contained in $\Gamma_1$, we have that $x$ cannot occur in $w$.  By hypothesis, $t$ and $\Theta$ share no variable  with $\Gamma_{2}, w,\Delta_{2}$.
 Hence,
by applying
\[ \infer[\lo \E]{ \Gamma_1 , \Sigma , \Gamma_2 \ergo (v[t/x])w:C ,
\Delta_1[t/x],\Theta , \Delta_2 }{\Gamma_1, \Sigma \ergo v[t/x]: B\lo
C, \Delta_1[t/x],\Theta && \Gamma_2 \ergo w: B, \Delta_2}\]we obtain a
derivation of $\Gamma , \Sigma \ergo (vw)[t/x]:C ,
\Delta[t/x],\Theta$, which verifies the claim.

\item $ \vcenter{ \infer[\lo \I]{ \Gamma \ergo \send{y} s : B \lo C ,
\Delta' }{\Gamma , y: B \ergo s: C , \Delta' }}$ where $\Delta = \send{y} s : B \lo C , \Delta'$. Since $x:A$ must occur in $\Gamma$ in the
conclusion of the rule application, we know that $x\neq y$. By
inductive hypothesis, $\Gamma , y: B , \Sigma \ergo s[t/x]: C ,
\Delta'[t/x] , \Theta$ is derivable. Hence, since by hypothesis $y$ does not occur
in $t$, by
applying\[\infer[\lo \I]{ \Gamma , \Sigma \ergo \send{y} (s[t/x]) : B
\lo C , \Delta'[t/x],\Theta }{\Gamma , y: B , \Sigma \ergo s[t/x]: C ,
\Delta'[t/x] ,\Theta }\]we obtain a derivation of $\Gamma , \Sigma
\ergo (\send{y} s)[t/x] : B \lo C , \Delta'[t/x],\Theta$, which
verifies the claim.

\item $ \vcenter{\infer[\pa \E]{ \Gamma_1 , \Gamma _2 ,\Gamma_3 \ergo
\distp{y}{z} t: \fal , \Delta_1 , \Delta _2 , \Delta _3 }{\Gamma_1
\ergo s: B\pa C , \Delta_1 && \Gamma _2 , y: B \ergo \Delta _2 &&
\Gamma_3 , z: C \ergo \Delta _3}}$ where $\Delta = \distp{y}{z} t:
\fal , \Delta_1 , \Delta _2 , \Delta _3$. Suppose that $x:A$ is
contained in $\Gamma_1$ and hence that $\Gamma _1 = \Gamma _1 ' ,x:A $
and that $\Gamma = \Gamma _1 ', \Gamma _2, \Gamma _3 $.  The case in
which $x:A$ is contained in $\Gamma _2$ or in $\Gamma _3$ are
analogous.  By inductive hypothesis $\Gamma_1 , \Sigma \ergo s[t/x]:
B\pa C , \Delta_1[t/x] , \Theta$ is derivable. By hypothesis, $t, \Theta$ and $\Gamma_{2}, y: B, \Delta_{2}$ and  $\Gamma_{3}, z: C, \Delta_{3}$ share no variable.
 Therefore, by applying
\[\infer[\pa \E]{ \Gamma_1, \Sigma , \Gamma _2 ,\Gamma_3 \ergo \distp{y}{z} (s[t/x]): \fal , \Delta_1[t/x] ,\Theta , \Delta _2 , \Delta _3
}{\Gamma_1 , \Sigma\ergo s[t/x]: B\pa C , \Delta_1[t/x],\Theta &&
\Gamma _2 , y: B \ergo \Delta _2 && \Gamma_3 , z: C \ergo \Delta _3}\]
we obtain a derivation of $ \Gamma, \Sigma \ergo (\distp{y}{z}
s)[t/x]: \fal , \Delta[t/x] ,\Theta$, which
verifies the claim.

\item $ \vcenter{ \infer[\pa \I ]{ \Gamma , x:A \ergo v\p w : B \pa C
, \Delta'}{\Gamma , x:A \ergo v: B , w: C, \Delta'}}$ where $\Delta =
v\p w : B \pa C , \Delta'$.  By inductive hypothesis $\Gamma , \Sigma
\ergo v[t/x]: B , w[t/x]: C, \Delta'[t/x] , \Theta$ is derivable. By
applying
\[\infer[\pa \I ]{ \Gamma, \Sigma \ergo (v\p w)[t/x] : B \pa C ,
\Delta'[t/x],\Theta }{\Gamma , \Sigma \ergo v[t/x]: B , w[t/x]: C,
\Delta'[t/x],\Theta }\]we obtain a derivation of $ \Gamma, \Sigma
\ergo (v\p w)[t/x] : B \pa C , \Delta'[t/x],\Theta$, which verifies
the claim.
\end{itemize}
\end{proof}

The  proof  of the Subject Reduction is now quite standard.

\begin{theorem}[Subject Reduction] \label{thm-subred}
 Assume there is a $\NLL$ derivation of $ \Pi \ergo  t_1:F_1 ,
\ldots , t_n:F_n, t_{n+1}, \dots, t_{m} $. If $t_1 \p \ldots \p t_m\ \red t_1' \p \ldots \p
t_m'$, then there is a derivation of $ \Pi \ergo \Phi, t_1':F_1 , \ldots ,
t_n':F_n, t_{n+1}', \ldots, t_{m}' $.

\end{theorem}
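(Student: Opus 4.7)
The plan is to proceed by induction on the reduction $t_1 \p \ldots \p t_m \red t_1' \p \ldots \p t_m'$, doing a case analysis on which reduction rule of Table~\ref{tab:reductions} is fired. The main tool throughout is the Substitution Lemma (Lemma~\ref{lem:derivation-substitution}), which allows us to replace a variable $x:A$ in any typable sequent by any term also derivable at type $A$, together with a routine induction on the context $\calc[\cdot]$ enclosing the reduced subterm to handle the contextual closure $\calc[t]\red\calc[u]$.

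For the beta reduction $(\lambda x\,u)t \red u[t/x]$ the last rule yielding $(\lambda x\,u)t:B$ must be $\lo\E$ applied to premises $\Gamma_1 \ergo \lambda x\,u:A\lo B, \Delta_1$ and $\Gamma_2 \ergo t:A,\Delta_2$, and $\lambda x\,u$ itself must come from $\lo\I$ on a premise $\Gamma_1, x:A \ergo u:B, \Delta_1$; applying the Substitution Lemma to this premise and the derivation of $t:A$ yields $u[t/x]:B$ in place of $(\lambda x\,u)t:B$. The communication reductions are the substantive ones. For $\calc[\sendm{x}{s}{u}] \p \cald[x] \red \calc[s] \p \cald[u]$ the strategy is to locate in the derivation the $\lo\I$ that introduced the channel $\send{x}s: A\lo B$ and the $\lo\E$ that consumed it against $u:A$. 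By linearity of channels (Proposition~\ref{prop:linchan}), the occurrence of $x$ inside $\cald[x]$ is precisely the free variable bound by that $\lo\I$; we may therefore strip off the $\lo\I$, restoring $x:A$ to the context of its premise, and invoke the Substitution Lemma with the derivation of $u:A$ to type $\cald[u]$ in place of $\cald[x]$, while the continuation $s:B$ takes the place of $\sendm{x}{s}{u}:B$ at the same type. The four $\pa\E$ reductions are handled analogously: locate the $\pa\E$ introducing $\distp{x}{y}(s\p t):\fal$, expose its three subderivations of $s\p t:A\pa B$, of $\Sigma_1, x:A \ergo \Theta_1$, and of $\Sigma_2, y:B \ergo \Theta_2$, apply the Substitution Lemma twice (once for $x$ with $s$, once for $y$ with $t$), and reintroduce $\termfal:\fal$ in the vacated slot using $\fal\I$.

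The main obstacle is the communication case, because the $\lo\I$ introducing $\send{x}$ can sit arbitrarily deep in the derivation relative to the $\lo\E$ that consumes it, and relative to the parallel composition that brings the receiver $\cald[x]$ into the same sequent. A structural sub-lemma will therefore be needed stating that a derivation of a sequent containing $\sendm{x}{s}{u}:B$ on the right and $x$ free in another right-hand-side term can always be rearranged so that the relevant $\lo\I/\lo\E$ pair is matched up; this amounts to tracking the permutations of rule applications typical of multiple-conclusion natural deduction, and it relies crucially on the linearity discipline to guarantee that each channel name occurs exactly once as an output and once as an input. The residual context $\Phi$ in the conclusion absorbs the auxiliary typed terms that appear during this restructuring, in particular the $\termfal:\fal$ components generated by the $\pa\E$-reductions and any additional parallel components introduced when $\calc$ or $\cald$ are not simple.
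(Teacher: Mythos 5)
Your overall strategy---Substitution Lemma for the principal cases, plus some way of coping with the fact that the $\lo\I$ introducing $\send{x}$ and the $\lo\E$ consuming it may be far apart in the derivation---is exactly the content of the paper's proof, and your treatment of the $\beta$-case, of the ``adjacent'' communication case, and of the $\pa\E$-reductions coincides with the paper's. The genuine difference is organizational, and it matters: you induct on the fired redex and then appeal to a ``structural sub-lemma'' asserting that the derivation can be rearranged so that the relevant $\lo\I/\lo\E$ pair becomes adjacent. That sub-lemma is where essentially all of the difficulty of the theorem lives, and you do not prove it; as stated it would itself require an induction with its own case analysis on intervening rules, i.e.\ it is the theorem in disguise for the communication cases. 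The paper avoids ever stating such a lemma by choosing a different induction measure: it inducts on the size of the \emph{typing derivation} and cases on its \emph{last rule}. When the last rule is the $\lo\E$ typing $\sendm{x}{v}{w}$ but the rule immediately above it is not the matching $\lo\I$, the paper commutes the $\lo\E$ upward past that one intervening rule $\rho$ (this is legal because $\send{x}v:A\lo B$ occurs in exactly one premise of $\rho$ and the premises share no variables), producing a strictly smaller derivation of a sequent on which the same reduction fires, and then invokes the inductive hypothesis. Each application of the IH moves the elimination one step closer to its introduction, so the permutation argument is discharged incrementally rather than as a standalone rearrangement lemma. If you want to keep your organization, you must actually prove your sub-lemma, and the cleanest way to do so is precisely this one-step-at-a-time commutation under an induction on derivation size---at which point you have reproduced the paper's proof. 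One further small point: the $\Phi$ in the statement does not in fact absorb any auxiliary terms; in the $\pa\E$-reductions the generated $\termfal:\fal$ replaces $\distp{x}{y}(s\p t):\fal$ in situ and is one of the $t_i'$, so no extra parallel components arise.
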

\begin{proof}

We prove the statement by induction on the number of rule
applications in the derivation of $ \Pi \ergo t_1:F_1 , \ldots ,
t_n:F_n, t_{n+1}, \dots, t_{m} $.

If no rule is applied in the derivation, then  $ \Pi \ergo t_1:F_1 , \ldots ,
t_n:F_n, t_{n+1}, \dots, t_{m} $ is of the form $ x: A \ergo x: A$  and the claim holds. Since no reduction applies to any term in the right-hand side of the sequents, the claim trivially holds.

Suppose now that the statement holds for any $\NLL$ derivation
containing $m$ or less rule applications, we show the result for a
generic $\NLL$ derivation of $ \Pi \ergo t_1:F_1 , \ldots , t_n:F_n, t_{n+1}, \dots, t_{m}
$ containing $m+1$ rule applications. We reason by cases on the last
rule applied in this $\NLL$ derivation.

\begin{itemize}
\item $ \vcenter{ \infer[\lo \E]{ \Gamma , \Sigma \ergo \sendm{x}{
v}{w}:B , \Delta , \Theta }{\Gamma \ergo \send{x} v: A\lo B, \Delta &&
\Sigma \ergo w: A, \Theta} }$ where \[t_1 \p \ldots \p t_m\quad =\quad
t_1 \p \ldots \p \sendm{x}{v}{w} \p \ldots \p t_m \]and
\[t_1' \p \ldots \p t_m' \quad = \quad (t_1 \p \ldots \p v \p \ldots
\p t_m )[w/x]\]
since $x$ occurs only once.

\begin{itemize}
\item If the last rule applied above $\Gamma \ergo \send{x}
v: A\lo B, \Delta$ is
\[\infer[\lo \I]{\Gamma \ergo \send{x} v: A\lo B, \Delta}{\Gamma ,
x:A \ergo v: B, \Delta}\]

then, by the restriction on $\NLL$ rule applications, $x$ occurs
either in $v$ or in $\Delta$, by Lemma~\ref{lem:derivation-substitution}
applied to the sequents $\Sigma \ergo w: A, \Theta$ and $\Gamma , x:A
\ergo v: B, \Delta$ we obtain that $\Gamma , \Sigma \ergo v[w/x]: B,
\Delta[w/x], \Theta$ is derivable and hence that we have a derivation
of $ \Pi \ergo t_1':F_1 , \ldots , t_n':F_n, t_{n+1}', \dots, t_{m}' $.
\item  Otherwise, we have\begin{small}
\[\infer[\lo \E]{ \Gamma , \Sigma \ergo \sendm{x}{v}{w}:B , \Delta , \Theta }{\infer[\rho]{\Gamma \ergo \send{x} v :
A\lo B, \Delta}{\Phi_1 \ergo \send{x} v : A\lo B, \Psi_1
&&\ldots&& \Phi_p \ergo \Psi_p} && \Sigma , w:A \ergo \Theta}\]
\end{small}where $ \send{x}v :A \lo B$ only occurs in one of the
premises of $\rho$ -- we display it in the first premise without loss of
generality. 

If we construct the following derivation
\begin{small}
\[ \infer[\rho]{\Gamma , \Sigma \ergo \sendm{x}{v}{ w}:B , \Delta, \Theta
}{\infer[\lo \E]{ \Phi_1 , \Sigma \ergo \sendm{x}{v}{w}:B , \Psi_1 ,
\Theta }{ \Phi_1 \ergo \send{x} v :A \lo B, \Psi_1 && \Sigma , w:A
\ergo \Theta}&&\ldots&& \Phi_p \ergo \Psi_p}
\]
\end{small} we know without loss of generality that \[ \sendm{x}{v}{w}:B , \Psi_1 , \Theta \quad
= \quad t_1:F_1 , \ldots , t_j:F_j, t_{n+1}, \dots, t_{i}\] and that \[t_1 \p \ldots \p t_j \p t_{n+1}\p \dots\p t_{l}
\quad \red \quad t_1' \p \ldots \p t_j'\p t_{n+1}'\p \dots\p t_{i}'\] because $x$ occurs either
in $v$ or in $\Psi_1$. By inductive hypothesis, there is a derivation
of \[ \Phi_1 , \Sigma \ergo t_1':F_1 , \ldots , t_j':F_j, t_{n+1}', \dots, t_{i}' \quad\] 
\[=\quad
\Phi_1 , \Sigma \ergo v[w/x]:B , \Psi_1[w/x] , \Theta\]By assumption
we have \[ \Gamma , \Sigma \ergo \sendm{x}{v}{w}:B , \Delta, \Theta
\quad = \quad \Pi \ergo t_1:F_1 , \ldots , t_n:F_n, t_{n+1}, \ldots, t_{m} \]Therefore, by
inspection of the rules of $\NLL$, it is easy to see that by the rule application
\[ \infer[\rho]{\Gamma , \Sigma \ergo v[w/x]:B , \Delta[w/x] , \Theta
}{\Phi_1 , \Sigma \ergo v[w/x]:B , \Psi_1[w/x] , \Theta &&\ldots&&
\Phi_p \ergo \Psi_p}
\]we obtain a derivation of the sequent\[ \Gamma , \Sigma \ergo
v[w/x]:B , \Delta[w/x] , \Theta\quad =\quad \Gamma , \Sigma \ergo
t_1':F_1 , \ldots , t_n':F_n, t_{n+1}', \ldots, t_{m}' \]
\end{itemize}

\item $ \vcenter{ \infer[\pa \E]{ \Gamma , \Sigma_1 , \Sigma_2 \ergo
\distp{x}{y}(v\p w) : \fal , \Delta , \Theta_1, \Theta_2 }{\Gamma
\ergo v\p w : A\pa B, \Delta && \Sigma_1 , x:A \ergo \Theta_1 &&
\Sigma_2 , y:B \ergo \Theta_2} }$ where \[t_1 \p \ldots \p t_m\quad
=\quad t_1 \p \ldots \p \distp{x}{y} (v\p w)  \p \ldots \p
t_m \]and
\[t_1' \p \ldots \p t_m' \quad = \quad (t_1 \p \ldots \p \termfal \p \ldots
\p t_m )[v/x\;\;w/y]\]

\begin{itemize}
\item If the last rule applied above $\Gamma \ergo v\p w : A\pa B,
\Delta$ is
\[\infer[\pa \I]{\Gamma \ergo v\p w : A\pa B, \Delta }{\Gamma \ergo v:
A,w:B, \Delta }\]

By the application of Lemma~\ref{lem:derivation-substitution} to the sequents
$\Gamma \ergo v: A,w:B , \Delta$ and $\Sigma_1 , x:A \ergo \Theta_1$
we obtain that $\Gamma , \Sigma _1 \ergo w:B , \Delta , \Theta_1[v/x]$
is derivable. Furthermore, by applying
Lemma~\ref{lem:derivation-substitution} to the sequents $\Gamma , \Sigma _1
\ergo w:B , \Delta , \Theta_1[v/x]$ and $\Sigma_2 , y:B \ergo
\Theta_2$ we obtain that $\Gamma , \Sigma _1, \Sigma_2 \ergo \Delta ,
\Theta_1[v/x], \Theta_2[w/y]$. Since
$x$ occurs in $\Theta _1$ and $y$ occurs in $\Theta _2$, we can
construct a derivation of $ \Pi \ergo t_1':F_1 , \ldots , t_n':F_n, t_{n+1}', \ldots, t_{m}' $
by applying $\fal\I$.

\item Otherwise, we have\begin{scriptsize}
\[\infer[\pa \E]{ \Gamma , \Sigma \ergo \distp{x}{y} (v \p w) : \fal ,
\Delta , \Theta_1 , \Theta_2 }{\infer[\rho]{\Gamma \ergo v \p w : A\pa B,
\Delta}{\Phi_1 \ergo v \p w : A\pa B, \Psi_1 &&\ldots&& \Phi_p \ergo
\Psi_p} & \Sigma_1 , x:A \ergo \Theta_1 & \Sigma_2 , y:B \ergo
\Theta_2 }\]\end{scriptsize}where $v \p w : A\pa B$ only occurs in one
of the premises of $\rho$ -- we display it in the first premise without
loss of generality.

If we construct the following derivation
\begin{scriptsize}
\[ \infer[\rho]{\Gamma , \Sigma \ergo  \distp{x}{y} (v \p w) : \fal , \Delta,
\Theta_1 , \Theta_2 }{\infer[\pa \E]{ \Phi_1 , \Sigma \ergo \distp{x}{y} (v \p w) : \fal , \Psi_1 , \Theta_1 , \Theta_2 }{ \Phi_1 \ergo v \p w : A\pa
B, \Psi_1 && \Sigma_1 , x:A \ergo \Theta_1 & \Sigma_2 , y:B \ergo
\Theta_2 }&&\ldots&& \Phi_p \ergo
\Psi_p}
\]\end{scriptsize} we know without loss of generality that \[ \distp{x}{y} (v \p w) : \fal ,
\Psi_1 , \Theta_1 , \Theta_2 \quad = \quad t_1:F_1 , \ldots , t_j:F_j, t_{n+1}, \ldots, t_{i}\]
and that \[t_1 \p \ldots \p t_j\p t_{n+1}\p \ldots\p t_{i} \quad \red \quad t_1' \p \ldots \p
t_j'\p t_{n+1}'\p \ldots\p t_{i}'\] because $x$ and $y$ occur in $\Theta_1 $ and $ \Theta_2$
respectively. By inductive hypothesis, there is a derivation of \[
\Phi_1 , \Sigma \ergo t_1':F_1 , \ldots ,
t_j':F_j, t_{n+1}', \ldots, t_{i}' \quad =\quad \Phi_1 , \Sigma \ergo \termfal:\fal , \Psi_1 ,
\Theta_1[v/x] , \Theta_2[w/y]\] By assumption\[ \Gamma , \Sigma \ergo
\distp{x}{y} (v \p w) : \fal , \Delta , \Theta_1 , \Theta_2 \quad = \quad
\Pi \ergo t_1:F_1 , \ldots , t_n:F_n, t_{n+1}, \ldots, t_{m} \]Therefore, by the rule application
\begin{small}
\[ \infer[\rho]{\Gamma , \Sigma \ergo \termfal:\fal, \Delta, \Theta_1[v/x] , \Theta_2[w/y] }{\Phi_1 , \Sigma \ergo \termfal:\fal , \Psi_1 , \Theta_1[v/x] , \Theta_2[w/y] &&\ldots&& \Phi_p \ergo \Psi_p}
\]\end{small}we have a derivation
of the sequent\[ \Gamma , \Sigma \ergo \termfal:\fal,
\Delta,  \Theta_1[v/x] , \Theta_2[w/y]\quad =\quad \Gamma , \Sigma \ergo
t_1':F_1 , \ldots , t_n':F_n, t_{n+1}', \ldots, t_{m}'\]
\end{itemize}




\item In all other cases, the term $\sendm{x}{v}{w} $ (or $\distp{x}{y} v$) that triggered the reduction $t_1 \p \ldots \p t_m
\red t_1' \p \ldots \p t_m'$ already occurs in the premises of the
last rule $\rho$ applied in the $\NLL$ derivation of $ \Pi \ergo t_1:F_1
, \ldots , t_n:F_n, t_{n+1}, \ldots, t_{m} $.


According to the typing rules, no variable can occur in different
premises of the same rule application. Therefore, both the term
$\sendm{x}{v}{w}$ (respectively $\distp{y}{z}v$) and the variable $x$
(respectively $y$ and $z$) occur inside the same premise of the last
rule application \[\infer[\rho] {\Sigma \ergo \Delta, \calc_1[u_1]:F_j
, \ldots , \calc_p[u_p]:F_n}{\Gamma \ergo u_1:G_1 , \ldots , u_p:G_p
&&\ldots&& \Phi \ergo \Psi}\]in the derivation of\[\Sigma \ergo
\Delta, \calc_1[u_1]:F_j , \ldots , \calc_p[u_p]:F_n \;\; =\;\; \Pi
\ergo t_1:F_1 , \ldots , t_n:F_n, t_{n+1}, \ldots, t_{m}\] Without
loss of generality, we assume that $x$ (or $y$ and $z$) occurs in the
premise \[\Gamma \ergo u_1:G_1 , \ldots , u_p:G_p\]

By reducing the redex triggered by $\sendm{x}{v}{w}$ 
or $\distp{x}{y}v$ in the term $u_1 \p \ldots \p u_p$ we have $u_1 \p \ldots \p u_p
\red u_1' \p \ldots \p u_p'$. By inductive hypothesis, there is a
derivation of $ \Gamma \ergo u_1':G_1 , \ldots , u_p':G_p $. Now, by
inspection of $\NLL$ typing rules, we can easily see that $t_1\p
\ldots \p t_m$ is of the form $\cald[\; \calc_1[u_1] \p \ldots \p
\calc_p[u_p]\; ]$ and by assumption
$$t_1'\p
\ldots \p t_m' = \cald[\; \calc_1[u_1'] \p \ldots \p
\calc_p[u_p']\; ]$$


Therefore, by inspection of the rules of $\NLL$, it
is easy to see that by the application of $\rho$
\[\infer[\rho] {\Sigma \ergo \Delta , \calc_1[u_1'] :F_j, \ldots ,
\calc_p[u_p']:F_n}{\Gamma \ergo u_1':G_1 , \ldots , u_p':G_p
&&\ldots&& \Phi \ergo \Psi}\] to the root of the derivation
of $ \Gamma \ergo u_1':G_1 , \ldots , u_p':G_p $, we obtain a
derivation of \[ \Sigma \ergo \Delta , \calc_1[u_1'] :F_j, \ldots ,
\calc_p[u_p']:F_n \;=\;\Pi \ergo t_1':F_1 , \ldots , t_n':F_n, t_{n+1}', \ldots, t_{m}' \]
\end{itemize}
\end{proof}

\section{Progress}\label{sec:progress}

 We show now that $\lamp$ is deadlock-free: if a process contains a potential communication,
then the term is not normal and the communication will be carried
out. In $\lamp$, potential communications are represented by
subterms of the form $\sendm{x}{w}{v}$
or
$\distp{x}{y}v$. Indeed, the presence of one of such subterms in a
process means that the process can use a communication channel $x$ --
or a pair of channels $x$ and $y$ -- to transmit a message $v$. Technically, then, we need to show that if such a subterm occurs in a term, there is also a receiver, hence the term is not normal. We start by showing some properties of the distribution of variables  inside $\NLL$ sequents and inside typed $\lamp$-terms 
\begin{proposition}[Linearity of Channels]\label{prop:linchan} Assume $\Gamma\ergo
\Delta $ is derivable in $\NLL$ and let $x$ be any variable. Then:
\begin{itemize}
\item  if $x$ occurs in $\Gamma $, then $x$ occurs  exactly  once in $\Delta$ and $\overline{x}$ does not occur in $\Delta$;
\item if $x $ does not occur in $\Gamma$, but occurs in $\Delta$,  then 
$x$ occurs exactly once as $x$ and exactly once as $\overline{x}$ in $\Delta$.
\end{itemize}
\end{proposition}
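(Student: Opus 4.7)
The plan is to prove the proposition by structural induction on the derivation of $\Gamma \ergo \Delta$ in $\NLL$, carrying both bullets as a single inductive invariant: for any variable $z$, either $z$ appears in the context (then $z$ appears exactly once on the right, and $\overline{z}$ does not appear), or $z$ is absent from the context (then $z$ and $\overline{z}$ each appear exactly once on the right). The side condition that premises of each rule share no variable does virtually all of the work.

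The base case is the axiom $x:A \ergo x:A$, which is immediate. For the inductive step, the rules $\pa \I$, $\fal \I$ and $\fal \E$ are trivial: they merely rearrange or wrap terms on the right-hand side without creating or destroying any free occurrence or any output-channel occurrence, so the invariant transfers unchanged. For the multiplicative rule $\lo \E$, the right-hand sides of the two premises are concatenated, but since the premises share no variable, the uniqueness asserted by either bullet is preserved automatically. The rule $\pa \I$ only merges $s$ and $t$ into $s \p t$, which does not affect free variables on either side.

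The genuinely informative cases are the three rules that migrate variables between the two sides of the turnstile. In the $\lambda$-abstraction variant of $\lo \I$, the inductive hypothesis gives exactly one free occurrence of $x$ in $t,\Delta$; the side condition for this rule forces that occurrence to lie inside $t$, so it is captured by $\lambda x$, and both $x$ and $\overline{x}$ disappear from the free variables of the conclusion. In the output-channel variant of $\lo \I$, the unique free occurrence of $x$ sits in $\Delta$ (by the side condition) and the rule introduces exactly one $\overline{x}$ in $\send{x}\, t$, placing $x$ in the situation of the second bullet in the conclusion. The $\pa \E$ rule does the same for two variables simultaneously: $x$ and $y$ come from disjoint premises with contexts $\Sigma_1, \Sigma_2$, their unique occurrences in $\Theta_1, \Theta_2$ survive in the conclusion, and $\distp{x}{y}\, s$ contributes exactly one fresh $\overline{x}$ and one fresh $\overline{y}$.

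The main obstacle, and the step most worth spelling out carefully, is the $\lambda$ case of $\lo \I$: one must argue that the single occurrence of $x$ predicted by the inductive hypothesis really lies within $t$ and not in $\Delta$, so that $\lambda x$ binds it and no free $x$ leaks into the conclusion. This follows by combining the inductive hypothesis (a single free $x$ somewhere in $t,\Delta$) with the rule's explicit side condition that $x$ occurs in $t$. All remaining bookkeeping---checking that variables satisfying the second bullet in a premise continue to satisfy it in the conclusion---reduces to the sharing-of-variables side condition, which immediately rules out any duplication across premises.
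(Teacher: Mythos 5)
Your proof is correct and takes essentially the same route as the paper, which disposes of this proposition with a one-line ``straightforward induction on the length of the derivation'' --- you have simply made the case analysis explicit, with the premises-share-no-variable side condition doing the work exactly as you describe. One small caveat: since $\lambda x\, t$ is by definition the term $\send{x}t$, the $\lambda$-variant of $\lo\I$ does not make $x$ and $\overline{x}$ ``disappear'' if occurrences are counted literally; rather it places $x$ in the situation of the second bullet (one occurrence as $x$ inside $t$, one as $\overline{x}$ in the prefix), exactly as in the output-channel variant, so the invariant is verified either way.
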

\begin{proof}
By straightforward induction on the length of the derivation of $\Gamma\ergo
\Delta $.
\end{proof}

We recall that by Definition \ref{def-context} a simple context
$\calc[\;]$ is a process which is not a direct parallel composition of
simpler processes. A crucial property of simple contexts that we are
going to obtain now is the following: if $\calc[\sendm{z}{v}{u}]$ is
typable and $\calc[\;]$ is simple, then the variable $z$ cannot occur
in $\calc[\;]$.  If such a configuration were possible, we might type
$\lamp$-terms like
$$z (\sendm{z}{v}{u})$$
and we would have to choose between allowing components of the same
process reduction to communicate
\[
\begin{array}{ccc} &\vspace{-5pt}?& \\ z ( \sendm{z}{v}{u})\ & \mapsto
& \ uv
\end{array}\] or tolerating a deadlock. Fortunately, the next
proposition rules out this scenery.

\begin{proposition}\label{prop-nofreaks}
Suppose $\Phi \ergo \calc[u]: F, \Pi$ or $\Phi \ergo \calc[u], \Pi$ is derivable in $\NLL$. Then, if $\calc[\;]$ is simple and the variable  $\overline{z}$ occurs in $u$, then $z$ does not occur in $\calc[\;]$.
\end{proposition}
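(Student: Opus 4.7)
The plan is to proceed by structural induction on the simple context $\calc[\;]$. The base case $\calc[\;] = [\;]$ is vacuous, as the bare hole contains no variable occurrences. For the inductive step, the simplicity hypothesis rules out the hole appearing anywhere under a parallel composition, so $\calc[\;]$ must have one of the shapes $\send{x}\calc'[\;]$, $\distp{x}{y}\calc'[\;]$, $\abort(\calc'[\;])$, $\calc'[\;]\,v$, or $v\,\calc'[\;]$, each with $\calc'[\;]$ again simple. These shapes correspond respectively to $\lo\I$, $\pa\E$, $\fal\E$, and (twice) $\lo\E$ as the last rule in the derivation of the given sequent; in every case the rule has a premise whose right-hand side contains $\calc'[u]$ (typed, or sitting directly under an $\abort$), with $\overline{z}$ still occurring inside $u$. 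Applying the inductive hypothesis to this premise yields $z \notin \calc'[\;]$, so it only remains to check that the outer constructor adds no new input occurrence of $z$ on top of $\calc'[\;]$.

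For the shapes $\send{x}\calc'[\;]$ and $\distp{x}{y}\calc'[\;]$, the added syntactic material consists exclusively of the output channel names $\overline{x}$ (and $\overline{y}$), which are not input occurrences of any variable; and $\abort(\calc'[\;])$ introduces no variable occurrence at all. So in these three cases $z \notin \calc[\;]$ follows directly from the inductive hypothesis.

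The real obstacle is the application cases $\calc'[\;]\,v$ and, symmetrically, $v\,\calc'[\;]$ coming from $\lo\E$: here I must additionally exclude $z$ from the sibling subterm $v$. This is where Proposition \ref{prop:linchan}, combined with the $\NLL$ side condition that the two premises of every rule share no variable, becomes decisive. Since $\overline{z}$ occurs in the right-hand side of the premise typing $\calc'[u]$, the first clause of Proposition \ref{prop:linchan} forbids $z$ from appearing in the context of that premise, and the variable name $z$ is in any case already in use in that premise through $\overline{z}$. Premise-disjointness then blocks any occurrence of $z$ (be it as input $z$ or as output $\overline{z}$) from the sibling premise of the $\lo\E$, and in particular $z \notin v$. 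This closes the induction.
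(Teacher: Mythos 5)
Your proof is correct and follows essentially the same route as the paper's: an induction whose cases are the possible outermost constructors of the simple context, with the premise-disjointness condition on $\NLL$ rules doing the real work in the $\lo\E$ case. Two small points worth tightening: the rule forming the outermost constructor of $\calc[u]$ need not be the \emph{last} rule of the derivation (the last rule could be, say, a $\fal\I$ or $\pa\I$ acting on other terms of $\Pi$), so you must either locate that rule further up or, as the paper does, induct on the derivation and pass through any rule whose premise still contains $\calc[u]$ intact; and in the $\send{x}\calc'[\;]$ and $\distp{x}{y}\calc'[\;]$ cases the paper additionally checks $z\neq x,y$ (otherwise $\overline{z}$ would occur twice, contradicting Proposition~\ref{prop:linchan}), a one-line addition you need if occurrences of $\overline{z}$ in the context are meant to be excluded as well.
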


\begin{proof} We prove the statement by induction on the number of
rule applications in the derivation of $\Phi\ergo \calc[u]: F, \Pi$ or $\Phi\ergo \calc[u], \Pi$. 

If $\calc[\;]$ is empty, we are done.

If no rule is applied in the derivation, then $\Phi \ergo \calc[u]:F,
\Pi $ or $\Phi\ergo \calc[u], \Pi$ is  of the form $ x: A \ergo x: A$, thus $u=x$ and $\calc[\;]$ is empty, therefore  the claim trivially holds.

Suppose now that the statement holds for any $\NLL$ derivation
containing $m$ or less rule applications, we show the result for a
generic $\NLL$ derivation containing $m+1$ rule applications. We
reason by cases on the last rule applied in this $\NLL$ derivation. We may assume that the term $\calc[u]$ does not occur in any premise of this last rule, otherwise we just apply the inductive hypothesis to any premise containing  $\calc[u]$ and obtain the thesis.
\begin{itemize}

\item $\vcenter{\infer[\fal\I]{\Gamma \ergo \Delta , \termfal : \fal
}{\Gamma \ergo \Delta}}$. Since $\calc[u]$ occurs in $\Delta$, this case is ruled out by our assumption that  $\calc[u]$ does not occur in $\Gamma \ergo \Delta$.

\item $\vcenter{\infer[\fal\E]{\Gamma \ergo \Delta , \abort(t)
}{\Gamma \ergo \Delta, t: \fal}}$. By our assumption, it must be the case that $\calc[u]=\abort(t)$. If $\calc[\;]=\abort([\;])$, surely $z$ does not occur in $\calc[\;]$. If  $\calc[\;]=\abort(\cald[\;])$, with $t=\cald[u]$, then by inductive hypothesis, $z$ does not occur in $\cald[\;]$, thus it does not occur in $\calc[\;]=\abort(\cald[\;])$ either.


\item $ \vcenter{ \infer[\lo \E]{ \Gamma , \Sigma \ergo ts:B , \Delta
, \Theta }{\Gamma \ergo t: A\lo B, \Delta && \Sigma \ergo s: A,
\Theta} }$.  By our assumption, it must be the case that $\calc[u]=t s$. 
Therefore, $t=\cald[u]$ or $s=\cald[u]$, with respectively $\calc[\;]=\cald[\;]\,s$ and $\calc[\;]=t\, \cald[\;]$. By inductive hypothesis, $z$ does not occur in $\cald[\;]$. Moreover, since the premises of the typing rule share no variable, $z$ cannot occur respectively in $s$ and $t$. Therefore, $z$ does not occur in $\calc[\;]$.

\item $ \vcenter{ \infer[\lo \I]{ \Gamma \ergo \send{x} t : A \lo B ,
\Delta }{\Gamma , x: A \ergo t: B , \Delta }}$. By our assumption, it
must be the case that $\calc[u]=\send{x} t$. Therefore, $t=\cald[u]$
and $\calc[\;]= \send{x} \cald[\;]$. By inductive hypothesis, $z$ does
not occur in $\cald[\;]$. Moreover, $z\neq x$, otherwise
$\overline{z}$ would occur twice in $\send{x} t$, contradicting
Proposition \ref{prop:linchan}. Therefore, $z$ does not occur in
$\calc[\;]=\send{x} \cald[\;]$.

%
%
%

\item $ \vcenter{\infer[\pa \E]{ \Gamma ,\Sigma_1 ,\Sigma_2 \ergo \distp{x}{y} t:
\fal , \Delta , \Theta_1 , \Theta_2 }{\Gamma \ergo t: A\pa B , \Delta
&& \Sigma _1 , x: A \ergo \Theta _1 && \Sigma _2 , y: B \ergo  \Theta _2 } }$.  By our assumption, it must be the case that $\calc[u]=\distp{x}{y} t$. Therefore,  $t=\cald[u]$ and $\calc[\;]= \distp{x}{y} \cald[\;]$. By inductive hypothesis, $z$ does not occur in $\cald[\;]$. Moreover, $z\neq x, y$, otherwise $\overline{z}$ would occur twice in $\distp{x}{y} t$, contradicting Proposition \ref{prop:linchan}. Therefore $z$ does not occur in $\calc[\;]= \distp{x}{y} \cald[\;]$.

\item $ \vcenter{ \infer[\pa \I ]{ \Gamma \ergo t\p s : A \pa B ,
\Delta}{\Gamma \ergo t: A , s: B, \Delta}}$. We claim that this case is not possible. Indeed, if it were, by our assumption we would have $\calc[u]= t\p s$. Therefore  $t=\cald[u]$ or $s=\cald[u]$, with respectively $\calc[\;]= \cald[\;]\p s$ and $\calc[\;]= t\p \cald[\;]$. Since by Definition \ref{def-context} the context $\calc[\;]$ would not be  simple, we would have a contradiction.

\end{itemize}
\end{proof}
It is now easy to show that if a communication is possible, it will
be carried out: if a typable $\lamp$-term contains output channels ready to
transmit messages, then it can always perform the communication,
because there is always a suitable receiver.
 
\begin{theorem}[Progress]\mbox{} 
\begin{enumerate}
\item Suppose 
$$\Gamma \ergo \calc[\sendm{x}{u}{v}]: F, t_{1}: F_{1}, \ldots, t_{n}: F_{n}, t_{n+1}, \ldots, t_{m} $$
is derivable in $\NLL$. Then the term  $\calc[\sendm{x}{u}{v}]\p t_{1}\p \ldots\p t_{m}$ is not normal.
\item Suppose 
$$\Gamma \ergo \calc[\distp{x}{y} u]: F, t_{1}: F_{1}, \ldots, t_{n}: F_{n}, t_{n+1}, \ldots, t_{m} $$
is derivable in $\NLL$. Then the term  $\calc[\distp{x}{y} u]\p t_{1}\p \ldots\p t_{m}$ is not normal.
\end{enumerate}
\end{theorem}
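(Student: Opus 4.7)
The plan is to locate, for each output channel in the exhibited subterm, a matching receiving variable elsewhere in the whole parallel term $T$, and then to fire the corresponding communication rule from Table \ref{tab:reductions}. The two key tools are Proposition \ref{prop:linchan} (linearity of channels), which guarantees that such a receiving variable exists at the sequent level, and Proposition \ref{prop-nofreaks}, which forces it to sit outside the simple context immediately surrounding the sender in its own top-level parallel component.

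For part 1, I would first observe that $\sendm{x}{u}{v}$ contributes $\overline{x}$ to the right-hand side of the sequent, so by Proposition \ref{prop:linchan} one gets $x \notin \Gamma$ and $x$ occurs exactly once as $x$ somewhere on the right-hand side. Next, decompose the whole term $T$ as its outermost parallel composition $T = s_1 \p \ldots \p s_K$; the send sits inside some $s_j = \calc''[\sendm{x}{u}{v}]$ with $\calc''$ simple, so Proposition \ref{prop-nofreaks}, applied with $z = x$, forbids $x$ from occurring in $\calc''$. If $x$ occurs in the continuation $u$, then $\sendm{x}{u}{v} = (\lambda x\,u)\,v$ is a $\beta$-redex and $T$ reduces. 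The variable-disjointness side condition of $\lo\E$ prevents $x$ from occurring in the message $v$. Hence the unique remaining $x$ must sit in some other component $s_l = \cald'[x]$ with $l \neq j$, and depending on the relative position of $s_j$ and $s_l$ in the topmost parallel tree, one of the two send reduction schemes of Table \ref{tab:reductions} applies inside a suitable context.

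Part 2 follows the same script with two output channels. Proposition \ref{prop:linchan} supplies unique receiving occurrences of $x$ and of $y$ outside $\distp{x}{y}u$, and Proposition \ref{prop-nofreaks}, applied twice (once with $z=x$ and once with $z=y$), keeps both outside the simple context around $\distp{x}{y}u$ in its top-level parallel component. The two receivers may lie in a single external component or in two distinct ones, and the four positional variants of the $\distp$ rule are tailored to exactly these four configurations. When the argument $u$ is already a parallel composition $s \p t$ one of these variants fires on $T$ directly; otherwise one analyzes how $u : A \pa B$ was introduced in the derivation to either find a $\beta$- or communication-redex inside $u$ (which one contracts instead) or further decompose $u$ until such a redex appears.

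The main obstacle is bookkeeping: the topmost $\p$-tree of $T$ must be re-associated so that the sender and its matching receiver(s) land in the positional pattern required by one of the reduction schemes. This is precisely why the send rule comes in two positional variants and the $\distp$ rule in four — together they cover every placement of the communicating components in the topmost parallel composition of $T$. Once the receivers have been pinpointed by Propositions \ref{prop:linchan} and \ref{prop-nofreaks}, the remainder of the argument is a short case analysis on those placements.
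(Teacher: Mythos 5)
Your overall strategy -- use Proposition~\ref{prop:linchan} to locate the matching occurrence of $x$, use Proposition~\ref{prop-nofreaks} to push it away from the sender, then fire a communication rule -- is the right one, but there is a genuine gap where you decompose $T$ into its outermost parallel composition $s_1\p\ldots\p s_K$ and assert that the sender sits in some $s_j=\calc''[\sendm{x}{u}{v}]$ with $\calc''$ \emph{simple}. A top-level parallel component is merely a term that is not itself an outermost parallel composition; it can still contain parallel compositions nested under an application, an output prefix $\send{z}$, $\abort(\cdot)$ or a $\distp{x}{y}$ operator, in which case the context surrounding the sender inside its own component is \emph{not} simple and Proposition~\ref{prop-nofreaks} does not apply to it. Concretely, the sequent $w:(A\pa\fal)\lo D,\ v:A \ergo w\,(x\p\sendm{x}{\termfal}{v}):D$ is derivable (derive $\ergo x:A,\ \send{x}\termfal:A\lo\fal$ by $\fal\I$ and $\lo\I$, apply $\lo\E$ with $v:A\ergo v:A$, form $x\p\sendm{x}{\termfal}{v}$ by $\pa\I$, and feed it to $w$ by $\lo\E$). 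Here $K=1$, there is no other component $s_l$, and the receiving occurrence of $x$ lives in the same component as the sender, inside the non-simple context $w\,(x\p[\;])$. The term is of course not normal -- the nested subterm $x\p\sendm{x}{\termfal}{v}$ is itself a communication redex and reduction is closed under arbitrary contexts -- but your argument concludes the false intermediate statement that $x$ must occur in a different top-level component, and so never finds this redex.

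What is missing is exactly the content of the paper's structural induction on $\calc[\sendm{x}{u}{v}]$: after disposing of the cases where $x$ occurs in some $t_i$ or in $u$, one shows that $\calc[\sendm{x}{u}{v}]$ contains a \emph{subterm} of the form $\cale[\sendm{x}{u}{v}]\p\cald[x]$ or $\cald[x]\p\cale[\sendm{x}{u}{v}]$, by walking down the term: at each non-parallel constructor, Proposition~\ref{prop-nofreaks} applied to the one-step simple context excludes $x$ from the sibling subterm, and at each parallel node either $x$ is on the other side (redex found) or one recurses into the side containing the sender. Equivalently, you must look for the \emph{innermost} parallel composition separating sender and receiver, not the outermost one. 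The same repair is needed in your part 2, where the receivers of $x$ and $y$ may likewise be buried inside the sender's own top-level component; note also that there your closing remark about ``further decomposing $u$'' when $u$ is not of the form $s\p t$ is left entirely unproved, and this is in fact the delicate point of part 2, since the $\distp{x}{y}$ reductions only fire when the argument is literally a parallel composition.
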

\begin{proof}\mbox{}

1.   By Proposition \ref{prop:linchan}, $\overline{x}$ and $x$ must occur exactly once in the term $\calc[\sendm{x}{u}{v}]\p t_{1}\p \ldots\p t_{m}$. If $x$ occurs in some $t_{i}$, then a reduction is possible and we are done. We assume therefore that $x$ occurs in $\calc[\sendm{x}{u}{v}]$.  We also assume  $\calc[\;]$ is empty, otherwise $x$ occurs in $u$, hence $\sendm{x}{u}{v}=(\lambda x\, u)v$, which is not normal and we are done.

We claim now that  $\calc[\sendm{x}{u}{v}]$  has a subterm of the form $\cale[\sendm{x}{u}{v}] \p \cald[x]$ or $  \cald[x] \p \cale[\sendm{x}{u}{v}]$, which implies a communication reduction  is possible, allowing us to conclude the proof. We prove our claim by induction on $\calc[\sendm{x}{u}{v}]$.  

If $\calc[\sendm{x}{u}{v}]=\send{z}\calc'[\sendm{x}{u}{v}]$ or $\calc[\sendm{x}{u}{v}]=\distp{y}{z}{\calc'[\sendm{x}{u}{v}]}$ or $\calc[\sendm{x}{u}{v}]= \abort(\calc'[\sendm{x}{u}{v}])$, then we apply the induction hypothesis on $\calc'[\sendm{x}{u}{v}]$, and we are done. 

If $\calc[\sendm{x}{u}{v}]= w\, \calc'[\sendm{x}{u}{v}]$ or $\calc[\sendm{x}{u}{v}]=  \calc'[\sendm{x}{u}{v}]\, w$, then by Proposition \ref{prop-nofreaks}, $x$ must occur in $\calc'[\sendm{x}{u}{v}]$, hence we apply the induction hypothesis on $\calc'[\sendm{x}{u}{v}]$, and we are done. 

If $\calc[\sendm{x}{u}{v}]= \calc'[\sendm{x}{u}{v}] \p w$ or $\calc[\sendm{x}{u}{v}]= w\p \calc'[\sendm{x}{u}{v}]$, then if $x$ occurs in $w$, we are done. Otherwise, we apply the induction hypothesis on $\calc'[\sendm{x}{u}{v}]$, and concluding the proof of the claim.

2. By Proposition \ref{prop:linchan}, $\overline{x}$ and $x$ must occur both exactly once in the term $\calc[\distp{x}{y} u]\p t_{1}\p \ldots\p t_{m}$ and so do $\overline{y}$ and $y$.  By inspection of the inference rules of $\NLL$ it is easy to see that $x, y$ cannot occur in $u$, therefore they occur in some $t_{i}$. Thus, a reduction is possible.
\end{proof}

\section{The Subformula Property}\label{sec:subformula}

We show in this section that each normal $\lamp$-term corresponds to an analytic
derivation. In other words, the type of each subterm of any
normal $\lamp$-term $t$ is either a subformula of the type of $t$ or a
subformula of the types of the  variables of $t$. This Subformula Property guarantees that,  proof-theoretically, the reduction rules of $\lamp$ give rise to a complete detour removal procedure, hence we can conclude they are also satisfactory from the logical point of view. Indeed, without this property is not possible to relate normalization to cut-elimination, as a normal proof without the subformula property cannot be translated in to a cut-free proof to begin with. Our multiple-conclusion natural deduction system can then be considered as a well-behaved alternative to sequent calculus for  and proof-nets for multiplicative linear logic.

We start off by defining what it means for a derivation to be in normal form.
\begin{definition}[Normal form] 
An $\NLL$ derivation
of a sequent $\Gamma  \ergo t_1: T_1 , \ldots ,
t_n:T_n, t_{n+1}, \ldots, t_{m}$ is in normal form if the term $t_1\p \ldots \p t_m$ is in
normal form.
\end{definition}

We then recall the notion of stack~\cite{Krivine}. From a logical
perspective, a stack represents a series of elimination
rules.\begin{definition}[Stack]\label{def:stack} A \emph{stack} is a
possibly empty sequence \mbox{$\sigma = \xi_{1}\xi_{2}\ldots \xi_{n}
$} such that for every $ 1\leq i\leq n$, $\xi_{i}=t$, with $t$ 
term of $\NLL$.  
 If $t$ is a proof term, $t\, \sigma$ denotes as
usual the term $(((t\, \xi_{1})\,\xi_{2})\ldots \xi_{n})$.
 \end{definition}

 Before proving the Subformula Property, we need to
establish two auxiliary results concerning the shape of terms. The
first of these two results guarantees that the expected connection
between the shape of a term with its type holds.

\begin{proposition}[Type coherence]\label{prop:coherence} If $\Gamma
\ergo t:F, \Delta $ is derivable in
$\NLL$ and $t$ is a value, then one of the following holds:
\begin{itemize}
\item $t= \send{x}  t$ and $F= A \lo B  $
\item $t= s\p t$ and $F= A \pa B $
\end{itemize}for some types  $A $ and $B$.
\end{proposition}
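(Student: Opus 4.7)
The plan is to proceed by induction on the length of the $\NLL$ derivation of $\Gamma \ergo t:F,\Delta$, reasoning by cases on the last rule applied. The key underlying observation is that no inference rule of $\NLL$ modifies an already typed term sitting in the right-hand context $\Delta$: each rule either leaves all terms in $\Delta$ untouched (as in $\fal\I$, $\fal\E$, or a different branch of $\lo\E,\pa\E$), or merges/wraps specific terms from the premises into a brand-new outermost construct (as in the $\I$-rules or $\lo\E,\pa\E$ acting on their principal terms). Consequently, if the term $t:F$ in the conclusion coincides with a typed term already present in some premise, I can immediately invoke the induction hypothesis on that premise; otherwise $t$ is the principal term created by the last rule, and I only need to check what shape/type pair that rule can produce.

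In the base case, the derivation is just an axiom $x:A \ergo x:A$, so $t=x$ is a variable, which is not a value, and the implication holds vacuously. For the inductive step, I would go through the rules. The $\lo\I$ rules (both the intuitionistic $\lambda$-form and the classical channel-form) are the only ones that introduce a term of shape $\send{x}\,t'$ (recall $\lambda x\,t' := \send{x}\,t'$ by syntactic sugar), and in both cases the type forced by the conclusion is $A\lo B$, matching the first disjunct. The $\pa\I$ rule is the only one introducing a term of shape $s\p t'$, and its conclusion assigns it the type $A\pa B$, matching the second disjunct. Every other rule either produces a term whose outermost shape is not that of a value (an application $st$ in $\lo\E$, a $\distp{x}{y}s$ in $\pa\E$, the constant $\termfal$ in $\fal\I$) or produces an untyped term in the context ($\abort(t')$ in $\fal\E$), which cannot play the role of $t:F$ since the statement requires $t$ to carry a type. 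Hence if $t$ is the principal term of the last rule and is a value, the last rule must be $\lo\I$ or $\pa\I$, delivering the desired form–type pairing.

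There is essentially no obstacle here: the argument is a pure case analysis made trivial by the design of $\NLL$, in which every connective is introduced by a rule that tags it with its intended outermost term-constructor. The only mild subtlety is to recognise that the classical $\lo\I$-rule (producing $\send{x}\,t'$ when $x$ does not occur in $t'$) and the intuitionistic $\lo\I$-rule (producing $\lambda x\,t'$ when $x$ occurs in $t'$) both fall under the first disjunct, thanks to the syntactic sugar $\lambda x\,u := \send{x}\,u$ introduced in Definition~\ref{def:typing}; this unifies the two sub-cases and avoids any duplication of reasoning.
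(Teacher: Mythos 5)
Your proposal is correct and follows essentially the same route as the paper's own proof: induction on the length of the derivation with a case analysis on the last rule, observing that only the $\lo\I$ rules and $\pa\I$ create values as principal terms and that they assign exactly the types $A\lo B$ and $A\pa B$ respectively, while all other rules create non-values and leave the terms already typed in the premises unchanged. Your explicit remark that both $\lo\I$ variants are unified by the syntactic sugar $\lambda x\,u:=\send{x}u$ is a nice clarification but does not change the substance of the argument.
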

\begin{proof} We prove the statement by induction on the number of
rule applications in the derivation of $\Gamma \ergo t:F, \Delta $.

If no rule is applied in the derivation, then $\Gamma \ergo t:F,
\Delta $ is  of the form $ x: A \ergo x: A$  and the claim trivially holds.

Suppose now that the statement holds for any $\NLL$ derivation
containing $m$ or less rule applications, we show the result for a
generic $\NLL$ derivation containing $m+1$ rule applications. We
reason by cases on the last rule applied in this $\NLL$ derivation.
\begin{itemize}

\item $\vcenter{\infer[\fal\I]{\Gamma \ergo \Delta , \termfal : \fal
}{\Gamma \ergo \Delta}}$. By inductive hypothesis the claim holds for
any value occurring in $ \Gamma \ergo \Delta$. Since $\termfal$ is not
a value, the claim holds also for any value occurring in $\Gamma \ergo
\Delta , \termfal : \fal $.

\item $\vcenter{\infer[\fal\E]{\Gamma \ergo \Delta , \abort(t)
}{\Gamma \ergo \Delta, t: \fal}}$. By inductive hypothesis the claim holds for
any value occurring in $ \Gamma \ergo \Delta$. Since $\abort(t)$ is not
a value, the claim holds also for any value occurring in $\Gamma \ergo
\Delta , \abort(t) $.


\item $ \vcenter{ \infer[\lo \E]{ \Gamma , \Sigma \ergo ts:B , \Delta
, \Theta }{\Gamma \ergo t: A\lo B, \Delta && \Sigma \ergo s: A,
\Theta} }$. By inductive hypothesis the claim holds for
any value occurring in $ \Gamma \ergo t: A\lo B, \Delta $ and $
\Sigma \ergo s: A , \Theta$. Since  $ts$ is not a value, the claim holds for all values in $\Gamma , \Sigma \ergo ts:B , \Delta
, \Theta $ as well.

\item $ \vcenter{ \infer[\lo \I]{ \Gamma \ergo \send{x} t : A \lo B ,
\Delta }{\Gamma , x: A \ergo t: B , \Delta }}$. By inductive
hypothesis the claim holds for any value occurring in $\Gamma , A
\ergo B , \Delta$. Since the value $\send{x} t : A \lo B$ has the
required type, the claim holds for all values in $ \Gamma \ergo \send
x\, t : A \lo B , \Delta $ as well.

%
 
\item $ \vcenter{\infer[\pa \E]{ \Gamma ,\Sigma_1 ,\Sigma_2 \ergo \distp{x}{y} t:
\fal , \Delta , \Theta_1 , \Theta_2 }{\Gamma \ergo t: A\pa B , \Delta
&& \Sigma _1 , x: A \ergo \Theta _1 && \Sigma _2 , y: B \ergo  \Theta _2 } }$.  \\By inductive hypothesis the claim holds for any
value occurring in $\Gamma \ergo t: A\pa B , \Delta $ and $ \Sigma _1
, x: A \ergo \Theta _1 $ and $ \Sigma _2 , y: B \ergo \Theta _2$.
Since $\distp{x}{y} t: \fal$ is not a value, the claim holds for all
values in $ \Gamma ,\Sigma_1 ,\Sigma_2 \ergo \distp{x}{y} t: \fal ,
\Delta , \Theta_1 , \Theta_2$ as well.

\item $ \vcenter{ \infer[\pa \I ]{ \Gamma \ergo t\p s : A \pa B ,
\Delta}{\Gamma \ergo t: A , s: B, \Delta}}$. By inductive hypothesis
the claim holds for any value occurring in $\Gamma \ergo t: A , s: B,
\Delta$. Since the value $t\p s : A \pa B$ has the required type, the
claim holds for $\Gamma \ergo t\p s : A \pa B , \Delta$ as well.
\end{itemize}
\end{proof}

 The second auxiliary result establishes that any normal
$\lamp$-term -- the type of which is not $ \fal$ -- is either a value
or a sequence of operations that cannot be carried out.

\begin{lemma}[The shape of normal terms]\label{lem:stackform} If there
is a derivation in $\NLL$ of $\Gamma \ergo u: F, \Pi$, where $u$ is in normal form,  is not a value and $F\neq \fal$, then $u = y\sigma$.
\end{lemma}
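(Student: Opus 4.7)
The plan is to proceed by induction on the derivation of $\Gamma \ergo u : F, \Pi$, exploiting the three hypotheses --- $u$ normal, $u$ not a value, $F \neq \fal$ --- to rule out every term constructor for $u$ except variables and applications.

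First I would dispatch the immediate cases. The terms $\send{x}t$ and $s \p t$ are values, contradicting the hypothesis. The terms $\termfal$ and $\distp{x}{y}t$ are typed by $\fal\I$ and $\pa\E$ respectively, so both carry type $\fal$, contradicting $F \neq \fal$. The construct $\abort(w)$ is introduced by $\fal\E$ without a type annotation, hence the sequent $\Gamma \ergo \abort(w) : F, \Pi$ is not derivable. If $u = x$ is a variable, then $u = y\sigma$ with $y = x$ and $\sigma$ empty, and we are done.

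The only nontrivial case is $u = v\,w$, whose derivation must end with $\lo\E$ on premises $\Gamma_1 \ergo v : A \lo B, \Delta_1$ and $\Gamma_2 \ergo w : A, \Delta_2$, with $F = B$. Since $u$ is normal, $v$ is normal; since $A \lo B \neq \fal$, the induction hypothesis applies to $v$. Either $v$ is a value, or $v = y\sigma'$ for some variable $y$ and stack $\sigma'$. In the latter case, $u = y\,\sigma'\,w = y\sigma$ with $\sigma = \sigma'\,w$, which gives the conclusion. In the former case, Proposition~\ref{prop:coherence} forces $v = \send{x}\,t$. If $x$ occurs in $t$, then $v = \lambda x\,t$ and $u = (\lambda x\,t)\,w$ is a $\beta$-redex, contradicting normality. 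If $x$ does not occur in $t$, then $u = \sendm{x}{t}{w}$ and $\overline{x}$ occurs in $u$; by Proposition~\ref{prop:linchan}, the variable $x$ must occur exactly once on the right-hand side of the sequent, and since $u$ itself contains $\overline{x}$ (not $x$), the matching occurrence of $x$ lies in some $t_i$ appearing in $\Pi$. This exhibits a sub-configuration $\calc[\sendm{x}{t}{w}] \p \cald[x]$ inside $u \p \Pi$, which is a communication redex.

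The main obstacle is exactly this last sub-case: the term $\sendm{x}{t}{w}$ is irreducible in isolation, so to extract a contradiction one must read the normality hypothesis on $u$ in the sense of the normal $\NLL$ derivation introduced at the beginning of Section~\ref{sec:subformula}, i.e.\ as normality of the whole parallel composition $u \p \Pi$. Under that reading, Proposition~\ref{prop:linchan} supplies a receiver for $\overline{x}$, and the communication redex closes the argument; every other case reduces to direct use of the induction hypothesis on $v$.
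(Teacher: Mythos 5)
Your proof follows the same route as the paper's: induction on the derivation, discharging every constructor except variables and applications via the value/$\fal$-type hypotheses, and in the application case $u=v\,w$ applying the induction hypothesis to the head $v$. You are in fact more thorough than the paper in the one delicate sub-case: the paper's $\lo\E$ case simply asserts ``by inductive hypothesis, $t=y\sigma$'' without first ruling out that the head is a value, whereas you work through $v=\send{x}t$ via Proposition~\ref{prop:coherence} and observe, correctly, that when $x$ does not occur in $t$ the term $\sendm{x}{t}{w}$ is irreducible in isolation, so the contradiction can only come from a communication redex involving the receiver occurrence of $x$ supplied by Proposition~\ref{prop:linchan}, which lives in $\Pi$. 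Your closing diagnosis is right: under a literal reading in which ``normal'' means normality of the standalone term $u$, this sub-case is a genuine counterexample to the statement, and both the lemma and its inductive proof must be read with normality of the whole parallel composition $u\p t_1\p\ldots\p t_m$, as in the definition of normal derivation and as the lemma is actually invoked in Theorem~\ref{proof:subform}. The only cosmetic point is that the last rule of the derivation need not be the $\lo\E$ that introduces $v\,w$; as the paper does, one should state the inductive claim for every term in the conclusion of the sequent so that rules not touching $u$ are absorbed by the induction hypothesis.
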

\begin{proof} We prove the statement by induction on the number of
rule applications in the derivation of $\Gamma \ergo u: F, \Pi$.

If no rule is applied in the derivation, then $\Gamma \ergo u: F, \Pi$ is of the form $ x: A
\ergo x: A$, thus
the claim holds since $u=x$. 

Suppose now that the statement holds for any $\NLL$ derivation
containing $m$ or less rule applications, we show the result for a
generic $\NLL$ derivation containing $m+1$ rule applications. We
reason by cases on the last rule applied in this $\NLL$ derivation.
\begin{itemize}

\item $\vcenter{\infer[\fal\I]{\Gamma \ergo \Delta , \termfal : \fal
}{\Gamma \ergo \Delta}}$. By inductive hypothesis the claim holds for
any term occurring in $ \Gamma \ergo \Delta$. Since $\termfal$ has
type $\fal $, the claim holds also for any term occurring in $\Gamma
\ergo \Delta , \termfal : \fal $.


\item $ \vcenter{ \infer[\lo \E]{ \Gamma , \Sigma \ergo ts:B , \Delta
, \Theta }{\Gamma \ergo t: A\lo B, \Delta && \Sigma \ergo s: A,
\Theta} }$. By inductive hypothesis the claim holds for any term
occurring in $ \Gamma \ergo t: A\lo B, \Delta $ and $ \Sigma \ergo s:
A , \Theta$. Since $ts$ is in normal form, also $t$ must be in
normal form. Hence, by inductive hypothesis, $t= y\sigma $. Therefore,
the claim holds for all terms in $\Gamma , \Sigma \ergo ts:B , \Delta
, \Theta $ as well.

\item $ \vcenter{ \infer[\lo \I]{ \Gamma \ergo \send{x} t : A \lo B ,
\Delta }{\Gamma , x: A \ergo t: B , \Delta }}$. By inductive
hypothesis the claim holds for any term occurring in $\Gamma , A \ergo
B , \Delta$. Since $\send{x} t : A \lo B$ is a value, the claim holds
for all terms in $ \Gamma \ergo \send{x} t : A \lo B , \Delta $ as
well.

%
 
\item $ \vcenter{ \infer[\pa \E]{ \Gamma ,\Sigma_1 ,\Sigma_2 \ergo \distp{x}{y} t:
\fal , \Delta , \Theta_1 , \Theta_2 }{\Gamma \ergo t: A\pa B , \Delta
&& \Sigma _1 , x: A \ergo \Theta _1 && \Sigma _2 , y: B \ergo  \Theta _2 }} $.\\By inductive hypothesis the claim holds for any
term occurring in $\Gamma \ergo t: A\pa B , \Delta $ and $ \Sigma _1
, x: A \ergo \Theta _1 $ and $ \Sigma _2 , y: B \ergo \Theta _2$.
Since $\distp{x}{y} t: \fal$ has type $\fal$, the claim holds for all
terms in $ \Gamma ,\Sigma_1 ,\Sigma_2 \ergo \distp{x}{y} t: \fal ,
\Delta , \Theta_1 , \Theta_2$ as well.

\item $ \vcenter{ \infer[\pa \I ]{ \Gamma \ergo t\p s : A \pa B ,
\Delta}{\Gamma \ergo t: A , s: B, \Delta}}$. By inductive hypothesis
the claim holds for any term occurring in $\Gamma \ergo t: A , s: B,
\Delta$. Since $t\p s : A \pa B$ is a value, the claim holds for
$\Gamma \ergo t\p s : A \pa B , \Delta$ as well.
\end{itemize}
\end{proof}

 We can now show that normal
$\lamp$-terms satisfy the Subformula Property. The proof is by induction on the size of the $\NLL$
derivation and, as usual, the difficult case is the one involving
implication elimination. Even though we handle this case by a standard
argument on the type of sequences of eliminations, represented here by
stacks, the multiple-conclusion setting forces us to integrate this
argument in the main induction. We do so by using a stronger inductive
statement that carries along the induction the required statement
about the type of stacks.
\begin{theorem}[Subformula Property] \label{proof:subform}
Consider any normal $\NLL$ derivation $\calp$ of the sequent $x_1 :X_1 , \ldots ,x_m:X_m \ergo t_1: T_1 , \ldots ,
t_n:T_n, \Pi$, where $\Pi$ does not contain any type.
Then every type $ S$ occurring in $\calp$
 is a subformula of some type $T_1 , \ldots , T_n $ or $ X_1
, \ldots , X_m $ or $ \fal$.

\end{theorem}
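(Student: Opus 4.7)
My plan is to proceed by induction on the number of rule applications in the normal derivation $\calp$, with a case analysis on the last rule. Let $\mathcal{S}$ denote the set of subformulas of $X_1, \ldots, X_m, T_1, \ldots, T_n, \fal$; the goal is to show that every type anywhere in $\calp$ lies in $\mathcal{S}$. Applying the inductive hypothesis to each subderivation gives that all its internal types are subformulas of its own conclusion types (plus $\fal$), so the key task in each case is to check that each premise's conclusion types already lie in $\mathcal{S}$.

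For $\pa\I$, $\lo\I$, $\fal\I$, $\fal\E$, and $\pa\E$ this check is immediate: each such rule either introduces a type whose proper subformulas are what appear in its premises ($A, B$ versus $A\pa B$ in $\pa\I$, and analogously $A\lo B$ in $\lo\I$), or it merely juggles $\fal$ ($\fal\I$, $\fal\E$), or it puts $A, B$ on the left of minor premises while $A\pa B$ sits on the right of the major premise ($\pa\E$). In every such case the premise conclusion types are either already in the conclusion of the rule or are immediate subformulas of something that is, hence in $\mathcal{S}$, and the inductive hypothesis closes the case.

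The only delicate case is $\lo\E$, with premises $\Gamma \ergo s: A\lo B, \Delta$ and $\Sigma \ergo t:A, \Theta$ and conclusion $\Gamma,\Sigma \ergo st:B, \Delta, \Theta$: the shared type $A$ and the functional type $A\lo B$ are not visibly subformulas of any type in the conclusion. I will resolve this by combining normality with Lemma~\ref{lem:stackform}. First, $s$ cannot be a value: by Proposition~\ref{prop:coherence} any value of type $A\lo B$ has the form $\send{x}u$, and then $st$ is either the $\beta$-redex $(\lambda x\, u)t$ when $x$ occurs in $u$, or the latent output $\sendm{x}{u}{t}$ otherwise, which by the Progress theorem would force the overall term to contain an active redex -- both contradict normality of $\calp$. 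Hence by Lemma~\ref{lem:stackform}, $s = y\sigma$ with $y$ a variable declared in $\Gamma$ having some type $Y$. Unfolding the iterated $\lo\E$ applications that type $s$ shows $Y = C_1 \lo \cdots \lo C_k \lo (A\lo B)$ with $\sigma = \xi_1\ldots\xi_k$ and each $\xi_i:C_i$, so $A\lo B$, $A$, and every $C_i$ are subformulas of $Y \in \mathcal{S}$. The inductive hypothesis applied to the two premises then yields the claim.

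The main obstacle is precisely this $\lo\E$ case, where the naive induction stalls because the premises' conclusion types $A$ and $A\lo B$ need not belong to $\mathcal{S}$ a priori. As the authors hint, the cleanest remedy is to strengthen the inductive statement with an explicit clause about variable-headed spines -- roughly: whenever $y\sigma$ occurs in $\calp$ with $y:Y$ free in the conclusion, the type of every $\xi_i$ in $\sigma$ is a subformula of $Y$ -- threading the stack analysis uniformly through the induction rather than invoking Proposition~\ref{prop:coherence} and Progress inline. Either presentation reduces to the same essential combination: normality plus Type Coherence plus Progress forces the head of any elimination spine to be a variable, after which Lemma~\ref{lem:stackform} and the typing discipline pin every spine type as a subformula of the head's declared context type.
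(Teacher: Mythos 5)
Your overall strategy coincides with the paper's: induct on the number of rule applications, dispatch every rule except $\lo\E$ by inspection, and in the $\lo\E$ case combine Type Coherence with normality to rule out the major premise being a value, so that Lemma~\ref{lem:stackform} forces it into the form $y\sigma$. Up to that point you match the paper's proof essentially verbatim (your appeal to Progress for the case where $x$ occurs outside $u$ is a legitimate, if slightly heavier, justification of the paper's ``a reduction rule would apply'').

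The gap is in how you close the $\lo\E$ case. You assert that the head variable $y$ of $s=y\sigma$ is ``declared in $\Gamma$'' with a type $Y\in\mathcal{S}$, and then unfold the spine to read off $A\lo B$ as a subformula of $Y$. Neither assumption is justified in this multiple-conclusion calculus. By Proposition~\ref{prop:linchan}, $y$ may instead be a communication variable whose binder $\overline{y}$ sits in a \emph{different} conclusion term of $\Delta$ (compare $\ergo x: A,\ \send{x}\termfal: A\lo\fal$, where $x$ is a stack-form term whose head is bound elsewhere in the succedent); and even when $y$ does lie in the antecedent of the premise sequent, it may be discharged lower in $\calp$ by $\lo\I$ or $\pa\E$, so its type $Y$ is not among the $X_i$ of the final sequent but only a subformula of a type introduced further down. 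Relating $Y$ (equivalently, the type of $y\sigma$) to the final conclusion therefore needs an invariant that is carried through the induction and survives discharging. That is exactly what the paper's strengthened statement provides: whenever a succedent term $t_i$ has the form $y\sigma$, its type $T_i$ is a subformula of one of the \emph{other} types occurring in that same sequent (antecedent or succedent). This clause is preserved at $\lo\I$ (the discharged $A$ becomes a subformula of $A\lo B$) and at $\pa\I$, and it is what lets the $\lo\E$ case conclude that $A\lo B$, hence $A$, $B$, and $S$, lie in $\mathcal{S}$. Your own proposed strengthening --- that the types of the entries of $\sigma$ are subformulas of $Y$ for $y$ ``free in the conclusion'' --- is not the right invariant on two counts: it still presupposes $y$ free in the final conclusion, and it bounds the stack entries by $Y$ rather than bounding the type of $y\sigma$ by the remaining types of its sequent. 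Substitute the paper's clause and the rest of your argument goes through.
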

\begin{proof} We prove a stronger statement:

\begin{quote} Consider any normal $\NLL$ derivation $\calp$ of the
sequent $x_1 :X_1 , \ldots ,x_m:X_m \ergo t_1: T_1 , \ldots ,
t_n:T_n, \Pi$, where $\Pi$ does not contain any type.
 Then every type $ S$ occurring in $\calp$
 is a subformula of some type $T_1 , \ldots , T_n $ or $ X_1
, \ldots , X_m $ or $ \fal$. Moreover, if a term $t_i$ with $i\in
\{1 , \ldots , n\}$ is of the form $y\sigma$ where $\sigma$ is a stack, then $T_i$ is a subformula of some type $T_1 ,
\ldots , T_{i-1} , T_{i+1}, \ldots , T_n $ or $ X_1 , \ldots , X_m $
or $ \fal$.
\end{quote}

We proceed by induction on the number of rule applications in $\calp$.

If no rule is applied in the derivation, then the root of $\calp$ is
$x: A \ergo x: A$  and the claim holds.

Suppose now that the statement holds for any $\NLL$ derivation
containing $q$ or less rule applications, we show the result for a
generic $\NLL$ derivation containing $q+1$ rule applications. We
reason by cases on the last rule applied in this $\NLL$ derivation.
\begin{itemize}

\item $\vcenter{\infer[\fal\I]{\Gamma \ergo \Delta , \termfal : \fal
}{\Gamma \ergo \Delta}}$. By inductive hypothesis the claim holds for
the derivation of the sequent $ \Gamma \ergo \Delta$, which must be normal, thus it holds also for $\calp$.

\item $\vcenter{\infer[\fal \E]{\Gamma \ergo \Delta, \abort(t)
}{\Gamma \ergo \Delta, t: \fal}}$. By inductive hypothesis the claim holds for
the derivation of the sequent $ \Gamma \ergo \Delta, t: \fal$, which must be normal, thus it holds also for $\calp$.


\item $ \vcenter{ \infer[\lo \E]{ \Gamma , \Sigma \ergo ts:B , \Delta
, \Theta }{\Gamma \ergo t: A\lo B, \Delta && \Sigma \ergo s: A,
\Theta} }$. By inductive hypothesis the claim holds for the
derivations of the sequents $ \Gamma \ergo t: A\lo B, \Delta $ and $
\Sigma \ergo s: A , \Theta$, which must be both normal. Let $S$ be any type occurring in $\calp$. If $S$ is
a subformula of some type in $\Gamma , \Sigma, B , \Delta, \Theta$, then the claim holds. If $S$, on the other hand, is
a subformula of $A\lo B$ or $A$, we consider the form of the term $t: A\lo B$.  Since $t: A\lo B$, if $t$ were a
value, then by
Proposition~\ref{prop:coherence},  $t=\send{x} u$. Since the premises of the inference do not share variables,  $x$ does not occur in $s$,  thus it occurs in $u$ or $\Delta, \Theta$ and it would be possible to apply a reduction rule and $\calp$ would not be in normal form. Therefore,
$t$ cannot be a value. Hence, by Lemma~\ref{lem:stackform}, $t=
y\sigma$ where $\sigma$ is a stack. Thus, by inductive hypothesis,
$A\lo B $ is a subformula of some type in $\Gamma $ or $\Delta $  and so is $S$.
 Moreover, the type $B$ of $ts$ is a subformula
of some type in $\Gamma $ or $\Delta $. Hence, the claim holds for
$\calp$ as well.

\item $ \vcenter{ \infer[\lo \I]{ \Gamma \ergo \send{x} t : A \lo B ,
\Delta }{\Gamma , x: A \ergo t: B , \Delta }}$. By inductive
hypothesis the claim holds for the derivation of the sequent $\Gamma ,
A \ergo B , \Delta$, which must be normal. Since all types which are subformulae of $A$ or
$B$ are subformulae of $A\lo B$ as well and since we do not have to
prove anything more about the term $\send{x} t : A \lo B$, the claim
holds for $\calp$ as well.

\item $ \vcenter{\infer[\pa \E]{ \Gamma ,\Sigma_1 ,\Sigma_2 \ergo
\distp{x}{y} t: \fal , \Delta , \Theta_1 , \Theta_2 }{\Gamma \ergo t:
A\pa B , \Delta && \Sigma _1 , x: A \ergo \Theta _1 && \Sigma _2
, y: B \ergo  \Theta _2 }}$.  \\By inductive hypothesis the claim
holds for the derivations of the sequents $\Gamma \ergo t: A\pa B ,
\Delta $ and $ \Sigma _1 , x: A \ergo \Theta _1 $ and $ \Sigma _2 , y:
B \ergo \Theta _2$, which must be all normal.  Let $S$ be any type occurring in $\calp$. If $S$ is
a subformula of some type in $\Gamma , \Sigma, \fal , \Delta, \Theta$, then the claim holds. If $S$, on the other hand, is
a subformula of 
$A\pa B $, we consider the form of the term $t: A\pa B$. Since $t: A\pa B$, if $t$ were a
value, then by
Proposition~\ref{prop:coherence},  $t=u\p v$. Since $x$ and $y$ must occur in $\Theta$, it would be possible to apply a reduction rule and $\calp$ would not be in normal form. Therefore,
$t$ cannot be a value. Hence, by
Lemma~\ref{lem:stackform}, $t= y\sigma$ where $\sigma$ is a
stack. Thus, by inductive hypothesis, $A\pa B $ is a subformula of
some type in $\Gamma $ or $\Delta $ and thus the claim holds with
respect to $\calp$ for all terms in the derivation the type of which
is a subformula of $A\pa B$. Since we do not have to prove anything
more about the term $\distp{x}{y} t: \fal$, the claim holds for
$\calp$ as well.

\item $ \vcenter{ \infer[\pa \I ]{ \Gamma \ergo t\p s : A \pa B ,
\Delta}{\Gamma \ergo t: A , s: B, \Delta}}$. By inductive hypothesis
the claim holds for the derivation of the sequent $\Gamma \ergo t: A ,
s: B, \Delta$ and hence also for $\calp$.
\end{itemize}
\end{proof}

\section{Strong Normalization}\label{sec:normalization}

It is quite immediate to see that all terms of the calculus strongly
normalize. Indeed, each reduction step strictly decreases the size of
the term. This is  due to the linear nature of terms (Proposition \ref{prop:linchan}): each communication or $\lambda$-reduction moves one or two terms from one location to another. Therefore,
no duplication is involved in the reductions. Since, moreover, each
reduction removes one binder from the term, we have that the size of
the term strictly decreases. We formally show this in the following
theorem.

\begin{definition}[Communication-size of a term] For any term $t$, its \emph{communication size} $\cs (t)$
is the number of occurrences of $\send{x}$ and $\distp{x}{y}$ in $t$.
\end{definition}

\begin{theorem}[Strong normalization]\label{thm:sn} For any term $t$ such that the
sequent $\Gamma \ergo t:F$ is derivable in $\NLL$, all sequences of
terms $t_1 , t_2 \ldots$ such that $t_1=t$ and $t_i\red t_{i+1}$
are finite and contain exactly $\cs(t)$ terms.
\end{theorem}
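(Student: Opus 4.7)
The plan is to exhibit a measure—the communication size $\cs(t)$—that strictly decreases by exactly one at each reduction step, from which strong normalization follows immediately.

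First, I would establish a substitution lemma: if $x$ occurs exactly once in $u$, then $\cs(u[t/x]) = \cs(u) + \cs(t)$. This is a direct induction on the structure of $u$: none of the term constructors contribute to $\cs$ beyond the outer binder they themselves introduce, and since $x$ occurs linearly in $u$ the term $t$ is substituted exactly once. By Proposition \ref{prop:linchan} (Linearity of Channels), whenever a substitution $u[t/x]$ arises from a reduction on a typable term, $x$ occurs exactly once in $u$, so this lemma will apply at every reduction step.

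Next, I would verify the main measure-decrease claim: for each base reduction $r \red r'$ in Table \ref{tab:reductions}, $\cs(r) = \cs(r') + 1$. For the $\beta$-rule $(\lambda x\, u)t = (\send{x}u)t \red u[t/x]$, the outer $\send{x}$ contributes one unit on the left, while the substitution lemma gives $\cs(u[t/x]) = \cs(u) + \cs(t)$, yielding a drop of exactly one. For the two communication rules $\calc[\sendm{x}{s}{u}] \p \cald[x] \red \calc[s] \p \cald[u]$ (and its symmetric variant), the unique $\send{x}$ binder inside $\sendm{x}{s}{u} = (\send{x}s)u$ is consumed, and $s$ and $u$ are relocated without duplication—again by the linear occurrence of $x$—so the net change in $\cs$ is exactly $-1$. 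The four $\distp$-rules are entirely analogous: the unique $\distp{x}{y}$ binder is replaced by $\termfal$ (which carries no communication weight) while $s$ and $t$ are inserted at the unique occurrences of $x$ and $y$, supplied by Proposition \ref{prop:linchan}. To lift this from base redexes to arbitrary reductions $\calc[r] \red \calc[r']$, I would observe that $\cs$ is homomorphic over every term constructor, so $\cs(\calc[r]) - \cs(\calc[r']) = \cs(r) - \cs(r') = 1$.

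The conclusion is then immediate: along any reduction sequence $t = t_1 \red t_2 \red \cdots$ we have $\cs(t_{i+1}) = \cs(t_i) - 1$, so the strictly decreasing sequence of natural numbers $\cs(t_1), \cs(t_2), \ldots$ forces the reduction to terminate, with the length of the sequence wholly determined by $\cs(t)$ and the communication size of the last term reached.

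I do not anticipate any genuine obstacle: the proof is essentially careful bookkeeping made possible by the absence of duplication in $\lamp$. The one point that merits attention is to remember that $\lambda x\, u$ is syntactic sugar for $\send{x}u$, so the $\beta$-rule really does eliminate one contributor to $\cs$; miscounting there is the only way the constant $1$ could go astray.
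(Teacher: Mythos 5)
Your proof is correct and follows essentially the same route as the paper's: both arguments rest on the observation that linearity of variables (Proposition~\ref{prop:linchan}) rules out duplication, so each reduction step consumes exactly one $\send{x}$ or $\distp{x}{y}$ binder and $\cs$ drops by exactly one, the paper merely packaging this as an induction on $\cs(t)$ rather than as an explicitly decreasing measure with a substitution lemma. Your closing remark that the length of a sequence is determined by $\cs(t)$ together with the communication size of the last term reached is in fact more precise than the theorem's literal count, which both you and the paper's own proof leave slightly loose (a maximal sequence has $\cs(t_1)-\cs(t_k)+1$ terms rather than $\cs(t)$ unless one also pins down the communication size of the normal form reached).
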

\begin{proof} The proof is by induction on the communication-size
$\cs(t)$ of $t$. In the base case, $t$ does not contain any redex and
the claim trivially holds. We prove then that the claim holds for a
generic term $t$ such that $\cs(t)=n+1 $, under the assumption that
the claim holds for all terms $s$ such that $\cs(u)=n$. In order to do
so, we consider a generic sequence of terms $t_1 , t_2 \ldots$ such
that $t_1=t$ and $t_i\red t_{i+1}$, and we reason by case
distinction on the shape of the reduction $t\red t_2$.
  \begin{itemize}
  \item $t_{1}=s_1 \p \ldots \p \calc [\sendm{x}{v}{w}] \p \ldots \p s_n
\quad \red \quad (\; s_1 \p \ldots \p \calc [v] \p \ldots \p s_n \;
) [w/ x]=t_{2} $.  Since, by Proposition \ref{prop:linchan}, $x$ occurs exactly once in $t_2$ and since $\sendm{x}{v}{w}$ is replaced by $v$, we have that $\cs(t_2)=n$ and that, by
inductive hypothesis, the claim holds for $t_2$. Thus, all sequences
of terms $ t_2, t_3 \ldots$ such that $t_i\red t_{i+1}$ contain
exactly $n$ terms. The sequence $t_1 , t_2 \ldots$ contains thus $n+1$
terms.


  \item {\small $t_{1}=s_1 \p \ldots \p \calc [\distp{x}{y} (v \p w)] \p
\ldots \p s_n \quad \red \quad (\; s_1 \p \ldots \p \calc
[\termfal] \p \ldots \p s_n \; ) [v/x\;\; w/y]=t_{2}$}.  Since, by Proposition \ref{prop:linchan},  both $x$ and
$y$ occur exactly once in $t_2$ and since $\distp{x}{y} ( v\p w)$ is
replaced by $\termfal$, we have that $\cs(t_2)=n$ and that, by
inductive hypothesis, the claim holds for $t_2$. Thus, all sequences
of terms $ t_2, t_3 \ldots$ such that $t_i\red t_{i+1}$ contain
exactly $n$ terms. The sequence $t_1 , t_2 \ldots$ contains thus $n+1$
terms.
\end{itemize}
\end{proof}

\section{Confluence}\label{sec:confluence}

We show now that each $\lamp$-term has a unique normal
form. Since all $\lamp$-terms have a normal form, this implies that
the calculus $\lamp$ is confluent: for any term $t$, if $t\mapsto
t'$ and $t\mapsto t''$, then there is a term $t^\star$ such that both
$t'\mapsto t^\star$ and $t''\mapsto t^\star$.

In order to simplify the proof of confluence, we define the concept of
activator. Intuitively, the activator of a reduction is the term which
is responsible for the reduction. For instance, to trigger a
communication we only need a subterm which is ready to send a message,
such as $\sendm{x}{v}{w}$. If we do trigger this communication of $w$
through $x$, the term $\sendm{x}{v}{w}$ is going to be the activator
of the reduction.
\begin{definition}[Activator] Given any reduction $s\red t$, we say
that the \emph{activator of} $s\red t$ is the subterm of $s$ of the
form $(\lambda x\, v)w$, $\sendm{x}{v}{w}$, $\distp{y}{z} v$,
which is displayed in the corresponding
reduction rule.
\end{definition}

\begin{theorem}[Uniqueness of the normal form]For any term $t$
such that the sequent $\Gamma \ergo t:F$ is derivable in $\NLL$, $t$
has only one normal form.
\end{theorem}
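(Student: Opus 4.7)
The plan is to derive uniqueness of normal forms from the combination of confluence and the strong normalization result of Theorem \ref{thm:sn}. Since every typable term has a normal form by Theorem \ref{thm:sn}, confluence immediately yields uniqueness: if $t \red^* n_1$ and $t \red^* n_2$ with both $n_i$ normal, any common reduct must coincide with each $n_i$. So the real content is confluence. I would apply Newman's lemma: $\lamp$ being strongly normalizing, it suffices to prove \emph{local confluence}, namely that whenever $t \red t_1$ and $t \red t_2$ there exists $t^\star$ with $t_1 \red^* t^\star$ and $t_2 \red^* t^\star$.

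In fact, inspection of the proof of Theorem \ref{thm:sn} shows that every reduction step decreases $\cs$ by exactly one, so all reduction sequences from $t$ have the same length $\cs(t)$. This suggests the stronger one-step diamond property: one can always take $t^\star$ reachable from both $t_1$ and $t_2$ in exactly one step. I would prove this diamond property by case analysis on the pair of activators $\alpha_1,\alpha_2$ of the two competing reductions. The key structural fact enabling the analysis is linearity (Proposition \ref{prop:linchan}): every variable $x$ and every dual channel $\overline{x}$ occurs at most once in a typed term, so no reduction duplicates subterms nor creates fresh redexes by substitution.

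The case split is as follows. \textbf{(a) Disjoint activators.} If $\alpha_1$ and $\alpha_2$ lie in disjoint subterms of $t$, each reduction leaves the surrounding context of the other intact, and both orders reach the same $t^\star$ in a single step. \textbf{(b) Nested activators.} If $\alpha_2$ sits inside a continuation, message, or receiver context of $\alpha_1$, firing $\alpha_1$ merely relocates $\alpha_2$ (by Proposition \ref{prop-nofreaks} it cannot lie in a position incompatible with the reduction), and $\alpha_2$ remains fireable afterward; symmetrically for the reverse order, both paths joining at the same $t^\star$. \textbf{(c) Variable-linked activators.} The two activators may share names, either because the message of one is a free variable bound by the other, or because the channel of one appears as a free variable inside the other. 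The paradigmatic instance is $\sendm{x}{y}{v} \p \sendm{y}{\termfal}{x}$, whose two reduction orders are shown in the communication subsection to converge to $v \p \termfal$, modulo the notational identity $\lambda x\,u = \send{x}\,u$ when $x$ occurs in $u$.

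The hard part will be case (c), where I must verify that the automatic morphing of a binder $\lambda y$ into an output $\overline{y}$ triggered by one communication is compatible with any subsequent communication enabled by the other. Concretely, I need to check that when a reduction fires $\sendm{x}{s}{u} \p \cald[x] \red s \p \cald[u]$ and $u$ contains a variable $y$ whose occurrences in the original $t$ straddled several components, the rearrangement produced by the two possible orders yields literally the same term, or at worst terms equal modulo associativity and commutativity of $\p$. Once the diamond property is confirmed in all these cases, local confluence is immediate, Newman's lemma combined with Theorem \ref{thm:sn} delivers confluence, and uniqueness of normal forms follows.
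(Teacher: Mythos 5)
Your proposal is correct and follows essentially the same route as the paper: the paper also establishes a one-step diamond property by case analysis on whether one activator is a subterm of the other, uses Proposition \ref{prop:linchan} and Proposition \ref{prop-nofreaks} to control how the two substitutions commute, and combines this with the fixed reduction length from Theorem \ref{thm:sn} via an induction on $\cs(t)$ (an inline Newman-style argument). The only detail you leave open that the paper must work for is the resolution of your case (c): the paper shows by explicit substitution computations that the two orders yield \emph{literally} the same term, and rules out the genuinely circular subcase (each activator's message containing the other's channel variable) by deriving a contradiction from Subject Reduction together with Proposition \ref{prop-nofreaks}.
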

\begin{proof} The proof is by induction on the length of the
normalization of $t$.

If $\cs(t)=0$, the claim trivially holds. We show the claim for a
generic term $t$ such that $\cs(t)=m+1$, under the assumption that the
claim holds for all terms with communication-size $m$ or less.

Suppose that the term $t$ reduces to two different terms $t'$ and
$t''$ if we reduce two different redexes in $t$. We show that both
$t'$ and $t''$ can reduce to the same term $t^\star$. Since, by
inductive hypothesis, $t'$ has a unique normal form and $t''$ has a
unique normal form as well, this is enough to show that $t$ has a
unique normal form.  Since the argument for the $\pa$-reductions
is everywhere close to identical to the argument for  $\lo$-reductions, we only
present  the latter.
Let us denote then the reduction $t\red t'$ as \[u_1 \p \ldots \p \calc
[\sendm{x}{v}{w}] \p \ldots \p u_n \quad \red \quad (\; u_1 \p
\ldots \p \calc [v] \p \ldots \p u_n \; ) [w/ x] \] 
There are two main cases.\\
\textbf{1}. If one of the
activators of $t\red t'$ and $t\red t''$ is a subterm of the
other, we consider the outermost activator. Without loss of
generality, we assume that the outermost is the activator of $t\red
t'$.  Now, the activator $\sendm{y}{s}r$ of $t\red t''$ is either a subterm of $v$
or of $w$.  
\begin{itemize}
\item The activator of $t\red t''$ is a subterm of $v$. The reduction
$t\red t''$ is then of the form\begin{small}\[u_1 \p \ldots \p \calc
[\sendm{x} {v}{w}] \p \ldots \p u_n \quad \red \quad (u_1 \p \ldots \p
\calc [\sendm{x} {v'}{w}] \p \ldots \p u_n)[r/y]\]\end{small}where
$v'$ is obtained from $v$ by replacing $\sendm{y}{s}r$ with $s$.
Since, by  Proposition~\ref{prop-nofreaks}, $y$ does not occur in $w$, we have $t''\red t^\star$, with
 $$t^\star := ((u_1\p \ldots \p \calc [v'] \p \ldots \p u_n)[r/y])[w/
x]$$  We just need to
show that $t'\red t^\star$, that is \begin{small}
    \[ (\; u_1 \p \ldots \p \calc [v] \p \ldots \p u_n \; ) [w/ x]
\quad \red \quad ((u_1\p \ldots \p \calc [v'] \p \ldots \p
u_n)[r/y])[w/ x]\]
  \end{small}
By Proposition~\ref{prop:linchan} we have that $y\neq
x$. Now, if $r$ contains $x$,
\begin{small}
\[ (u_1 \p \ldots \p \calc [v] \p \ldots \p u_n ) [w/ x] = (u_1 \p \ldots \p \calc [v[w/x]] \p \ldots \p u_n )\]
\[ \red
\quad  (\; u_1 \p \ldots \p \calc [v'] \p \ldots \p u_n \; )[r[w/ x]/y]=t^{\star} \]
\end{small}
If $r$ does not contain $x$ and since $y$ does not occur in $w$, we have
\begin{small}
\[ (u_1 \p \ldots \p \calc [v] \p \ldots \p u_n ) [w/ x] \quad \red
\quad  (\; u_1 \p \ldots \p \calc [v'] \p \ldots \p u_n \; )[w/x][r/y]=t^{\star} \]
\end{small}

\item The activator of $t\red t''$ is a subterm of $w$. The reduction
$t\red t''$ is then of the form\begin{small}\[u_1 \p \ldots \p \calc
[\sendm{x}{v}{w}] \p \ldots \p u_n \quad \red \quad (u_1 \p \ldots \p
\calc [(\sendm{x}{v}{w'}] \p \ldots \p u_n)[r/y]\]\end{small}where
$w'$ is obtained from $w$
 by replacing $\sendm{y}{s}r$ with $s$.
  We have $t''\mapsto t^{\star}$, with  $$t^\star:= ((u_1\p \ldots \p \calc [v] \p \ldots \p u_n)[r/y])[w'[r/y]/
x]$$  We just need to
show that $t'\red t^\star$, that is  
\begin{small}
  \[ (\; u_1 \p \ldots \p \calc [v] \p \ldots \p u_n \; ) [w/ x] \quad
    \red \quad   ((u_1\p \ldots \p \calc [v] \p \ldots \p
    u_n)[r/y])[w'[r/y]/ x]\]
\end{small}We have the
following reduction:\begin{small}
  \[ (u_1 \p \ldots \p \calc [v] \p \ldots \p u_n ) [w/ x] \quad \red
\quad ( (\; u_1 \p \ldots \p \calc [v] \p \ldots \p u_n \; ) [w'/
x])[r/y] \]
\end{small}
Now, $r$ does not contain $x$, by Proposition \ref{prop-nofreaks} and because the activator of
$t\red t''$ occurs in $w$. Since, moreover, $y\neq x$, 
 we also have\begin{scriptsize}
  \[ ( (\; u_1 \p \ldots \p \calc [v] \p \ldots \p u_n \; ) [w'/
x])[r/y] \; =\; ( (\; u_1 \p \ldots \p \calc [v] \p \ldots \p u_n \; )
[r/y]) [w'[r/y]/ x]\]
\end{scriptsize}and therefore that $t'\red t^\star$.
\end{itemize}
\textbf{2}. If neither of the activators of $t\red t'$ and $t\red t''$ is a subterm of the other, the reduction
  $t\red t''$ is of the
  form\begin{small}\[u_1 \p \ldots \p \calc [\sendm{x}{v}{w}] \p
      \ldots \p u_n \quad \red \quad (u_1' \p \ldots \p \calc'
      [\sendm{x}{v}{w}] \p \ldots \p u_n')[r/y]\]\end{small}where
  $u_i '$ and  $ \calc'
      [\; ]$ are either equal to, respectively, $u_{i}$ and $\calc[\;]$ or obtained from $u_i $ and  $ \calc
      [\; ]$, respectively, by replacing the activator $\sendm{y}{s}{r}$ of
  $t\red t''$ with $s$. We have $t''\mapsto t^{\star}$, with   
   $$t^\star := ((u_1'\p \ldots \p \calc' [v] \p \ldots \p u_n')[r/y])[w[r/y]/
  x]$$
We just need to
  show that $t'\red t^\star$,   that is 
  \begin{small}
    \[ (\; u_1 \p \ldots \p \calc [v] \p \ldots \p u_n \; ) [w/ x]
      \quad \red \quad ((u_1'\p \ldots \p \calc' [v] \p \ldots \p
      u_n')[r/y])[w[r/y]/ x]\]
  \end{small}Now,
  if $s$ is the activator of a reduction, also $s[w/x]$ has the
  suitable shape to be one; and the relative reductions substitute the
  same variable. By Definition~\ref{def:sequent}, moreover, $y\neq x$
  and hence $y$ occurs in the term
  $(\; u_1 \p \ldots \p \calc [v] \p \ldots \p u_n \; ) [w/ x]$, and
  we can reduce it as follows:
  \begin{small}
    \[ (u_1 \p \ldots \p \calc [v] \p \ldots \p u_n \; ) [w/ x]
      \quad \red \quad ( (\; u_1' \p \ldots \p \calc '[v] \p \ldots
      \p u_n') [w/ x])[r[w/ x]/y] \]
  \end{small}
  We must now consider several cases, according as to whether $x$ and $y$ occurs in the messages $w$ and $r$.\\
   - If $x$ occurs in $r$, but $y$ does not occur in $w$, then, by Proposition \ref{prop:linchan}, $x$ does not occur in $u_1' \p \ldots \p \calc '[v] \p \ldots \p u_n' $ and we obtain indeed
\[ ( (\; u_1' \p \ldots \p \calc '[v] \p \ldots
      \p u_n') [w/ x])[r[w/ x]/y]\] \[= ( \; u_1' \p \ldots \p \calc '[v] \p \ldots
      \p u_n')[r[w/ x]/y]=( \; u_1' \p \ldots \p \calc '[v] \p \ldots
      \p u_n')[r/y][w/ x]=t^{\star}\]
- If $x$ does not occur in $r$, but $y$  occurs in $w$, then, by Proposition \ref{prop:linchan}, $y$ does not occur in $u_1' \p \ldots \p \calc '[v] \p \ldots \p u_n' $ and we obtain indeed
\[ ( (\; u_1' \p \ldots \p \calc '[v] \p \ldots
      \p u_n') [w/ x])[r[w/ x]/y]\] \[= ( \; u_1' \p \ldots \p \calc '[v] \p \ldots
      \p u_n')[w/ x])[r/y]=( \; u_1' \p \ldots \p \calc '[v] \p \ldots
      \p u_n')[w[r/y]/x]=
      t^{\star}\]
      -  If neither $x$ occurs in $r$ nor $y$  occurs in $w$, then, by Proposition \ref{prop:linchan}, $y$ does not occur in $u_1' \p \ldots \p \calc '[v] \p \ldots \p u_n' $ and we obtain indeed
\[ ( (\; u_1' \p \ldots \p \calc '[v] \p \ldots
      \p u_n') [w/ x])[r[w/ x]/y]\] \[= ( \; u_1' \p \ldots \p \calc '[v] \p \ldots
      \p u_n')[w/ x])[r/y]=( \; u_1' \p \ldots \p \calc '[v] \p \ldots
      \p u_n')[r/y][w/x]=
      t^{\star}\]
- Finally, we show it cannot be that both $x$ occurs in $r$ and
$y$ occurs in $w$. Indeed, suppose, for the sake of contradiction, that they do.
By letting the term $\sendm{y}{s}{r}$  transmit its message $r$,  we would have\begin{small}
\[t=u_1 \p  \ldots\p \calc
[\sendm{x}{v}{w}] \p \ldots \p u_n \qquad \red \quad u_1' \p \ldots \p \calc'
[\sendm{x}{v}{w[r/y]}] \p \ldots \p u_n':= t'''  \]
\end{small}
By Theorem \ref{thm-subred}, $\Gamma \ergo t''':F$, which is in contradiction with Proposition \ref{prop-nofreaks}, since $x$ occurs in $w[r/y]$.
\end{proof}

\section{Related Work and Conclusions}\label{sec:related-work}

The origins of $\lamp$ are influenced by and related to several functional and process calculi. We shall consider them one by one.

\subsection*{$\lamp$ and Linear Session Typed $\pi$-calculi}

One of the most established ways to interpret linear logic proofs into
concurrent programs consists in interpreting sequent calculi for
linear logic into $\pi$-calculi with session types,  main references being~\cite{CairesPfenning, Wadler2012, KMP2019}. 
 In those calculi, a session type is a logical expression containing information about the whole sequence
of exchanges that occur between two processes through one
channel. Session types are attached only to  communication channels, therefore they only describe the channels' input/output behaviour, not the processes using them. In $\lamp$, as in the tradition of the Curry-Howard correspondence, types are instead attached to processes, are read as process specifications and are employed to formally guarantee that processes will behave according to the specifications expressed by their type. Nevertheless, since the nature of $\pi$-calculus processes is determined by their channels' behaviour, specifying channels means also specifying processes, hence the difference with $\lamp$ is not as significant as it may appear.
 
 A channel typed
by a session type will occur in general more than once and
will support several transmissions. This is not the case
in $\lamp$. The linearity of the calculus implies that each output channel
can only occur once and thus be used once. As shown in our examples, two processes
of $\lamp$ can nevertheless communicate more than once, since they can
be connected by several different channels. In particular, in $\lamp$
there is no restriction on the number and direction of the
communications between any two processes: dialogues are possible and
information can be exchanged back and forth with no limitation.

Unlike in $\pi$-calculus, there is no need in $\lamp$ of a construct $\nu x$ for declaring a channel private. Since in our calculus any given communication channel connects exactly two processes, the channel is already a private link between the involved processes, simply because there is no other process sharing the same channel. 

The typing system of $\lamp$ supports also cyclically interconnected processes, therefore $\lamp$ cannot be faithfully translated in the linearly session typed  $\pi$-calculi \cite{CairesPfenning, Wadler2012, KMP2019}. 
At the time of this writing, there is no known straightforward translation of the multiplicative fragments of those calculi in $\lamp$, but also no known reason a translation should not be possible.

 \subsection*{$\lamp$ and other Typed Concurrent $\lambda$-calculi}
 
 As a consequence of adopting linear sequent calculus, the session typings of \cite{CairesPfenning, Wadler2012, KMP2019} do not support directly functional computation. Several approaches have been developed to overcome this intrinsic limitation.
 
 The system in \cite{TCP2013} 
 provides some combination of session typed processes and functional programs. However, the typing system for the functional part is just added on top of the linear logic system. Therefore, there is no seamless account of both functional and concurrent computation by a \emph{single system} for linear logic, which is the main accomplishment of our work on $\lamp$. 
 
 The linear concurrent functional calculus GV in \cite{Wadler2012} is more similar in spirit to $\lamp$. However, when it comes to linear logic, it is not as pure as ours and displays no connection between computation and the full normalization process  leading to analytic proofs. No direct reduction rules for GV are provided either.
 
 The two concurrent  $\lambda$-calculi studied in \cite{ACG2017, ACG2018gandalf} are typed by G\"odel and classical logic. They are more expressive than $\lamp$ when it comes to functional computation, because they are based on extensions of intuitionistic natural deduction and thus feature full typed $\lambda$-calculus. However, the reduction rules are significantly more complex than those of $\lamp$, due to the complicated treatment that code mobility requires. Moreover, strong normalization and confluence do not hold. 
 
 The concurrent $\lambda$-calculus of \cite{AschieriZH} is typed by first-order classical logic. It resembles a session-typed $\lambda$-calculus, but supports only data transmission. It is strongly normalizing, but is non-confluent  and does not enjoy any subformula property.
 
 The calculus Lolliproc in \cite{Mazurak} is a parallel interpretation  of control operators,  rather than a calculus centered on communication. The completeness of the typing system with respect to linear logic is not proved and no subformula property is shown either, which makes unclear if it actually is a legitimate proof system for linear logic.

 \subsection*{$\lamp$ and Proof-Nets}
 
Proof-nets  for multiplicative linear logic \cite{Gir87} can encode several programming languages, among them linear $\lambda$-calculus \cite{GLT89} and  typed $\pi$-calculus \cite{Laurent2010}.  We believe $\lamp$ can be encoded in proof-nets as well and hence offers a new way of using proof-nets for representing functional concurrent computation. However, $\lamp$ is based on the connectives $\lo, \pa$ and normalization, while proof-nets are based on the connectives $\te, \pa$ and cut-elimination, which makes $\lamp$  far more suitable as syntax for functional concurrent programming. Indeed, $\lamp$ is a full-blown programming language right off the bat, while proof-nets were created as an elegant and economical proof system for linear logic.

\subsection*{$\lamp$ and Exponentials}
 
As type system of $\lamp$, multiplicative linear logic already features all the important characters  of functional concurrent computation, and because of both its expressiveness and simplicity, it deserves to be treated as a type system on its own, as traditional in linear logic. One non-trivial problem is to extend the type system of $\lamp$ with exponentials. We do not see any real obstacles to doing that, but we leave the problem as future work. As a result, $\lamp$ may gain more duplication and replication abilities, as those of CP \cite{Wadler2012}.

\subsection*{$\lammu$ and $\lamp$: Sequentiality vs Parallelism}

As far as the logical rules for
implication are concerned, the type system of the
original version of $\lammu$~\cite{Par91} is exactly the non-linear version of the
type system of $\lamp$. However, $\lammu$ and $\lamp$ strongly diverge
if we consider their computational behaviour. In
particular, the crucial difference between $\lamp$ and $\lammu$ is
that while in $\lammu$ a whole list of formulas is used to type a
single term: $t: A_1, \ldots , A_n$, in $\lamp$ each element of the list of formulas
types a different term: $t_1: A_1, \ldots , t_n:A_n$. Indeed, although the type system of $\lammu$ actually builds several different terms, it does it  sequentially, so the result is a unique term; instead, the type system of $\lamp$ builds several different terms in parallel and let them communicate. 

The reason is that
$\lammu$ forces only one distinguished ``active'' formula of each list to be used
at any time in the derivation. 
Labels $\alpha, \beta \dots $ -- also called $\mu$-variables -- are
introduced to explicitly activate a formula, rule below left, which
can therefore be used in an inference, or
deactivate it, rule below right:
\[ \infer{ \mu \beta\, t : \Gamma \ergo A, \Delta}{t: \Gamma \ergo
A^{\beta}, \Delta}\qquad \qquad \qquad \qquad \infer{[\beta]t: \Gamma
\ergo A^{\beta}, \Delta }{t: \Gamma \ergo A, \Delta} \]where $\Gamma$
is a context containing declaration of $\lambda$-variables $x,
y, \dots $ and $\Delta$ is a multiset containing formulae labeled by
$\mu$-variables $\alpha, \beta \dots$. 
As we can see from the condition on the context $\Delta$,
at most one formula in each sequent can be without  a label.


The history of activations and
deactivations occurring in the derivation is recorded inside the corresponding $\lammu$-term, as we can see in the following example
\begin{small}
\[\infer{(\mu \beta . u) \, v : {B}, \Delta , \Sigma }{\infer[]{\mu
\beta .u: {A\impl B}, \Delta }{\infer*{u: {(A\impl B)^{\beta}},
\Delta}{\infer[]{[\beta ]w : {(A \impl B)^{\beta}}, \Delta
'}{\infer*{w : {A \impl B}, \Delta '}{}}}} & \infer*{v:{A},
\Sigma}{}}\] 
\end{small}
Here the construction of $w$ has been paused for a while and resumed with the last inference rule. Hence, $v$ can be considered as the ``real'' argument of $w$, rather than of $ (\mu \beta . u )v$. As consequence,
there is the reduction $ (\mu \beta . u )v\;
\mapsto\; \mu \beta . u[[\beta](wv)/[\beta]w] $, where the argument $v$ of
$\mu \beta. u$ is transmitted as argument to all subterms of $t$ with label
$[\beta]$.  

\subsection*{Multiple-conclusion Full Intuitionistic Linear Logic}

The $\lambda$-calculus in \cite{paivalinear} provides a computational interpretation of an intuitionistic linear logic with the $\pa$ connective. The typing system however is a linear sequent calculus, so the corresponding $\lambda$-calculus is rather obtained by translation  than by a Curry-Howard isomorphism between proofs and $\lambda$-terms. To obtain such an isomorphism, our move to natural deduction was necessary. Also, in \cite{paivalinear} cut-elimination is not interpreted as usual as communication, so the system does not support concurrency. A radical change in the treatment of linear implication was needed to recover communication and an interpretation of $\pa$ as parallel operator was needed to recover parallelism, which is what we did. It is also instructive to compare  $\pa\E$ rule to the corresponding rule in~\cite{paivalinear}:
\[\infer[\pa \E_2 ]{\Gamma , \Sigma_1 ,\Sigma _2 \ergo \mathtt{let} \,
s \,\mathtt{be} (A \pa - ) \mathtt{in} \, v : C , \mathtt{let} \, s
\,\mathtt{be} ( - \pa B ) \mathtt{in}\, w : D , \Delta , \Theta _1
,\Theta _1 }{ \Gamma \ergo s: A \pa B , \Delta && \Sigma_1 , x: A
\ergo v:C, \Theta _1 && \Sigma _2 , y:B \ergo w: D, \Theta _2 }\]
This rule  duplicates the term $s$, compromising linearity of variables. 
Nevertheless, 
the linearity of the computation is  safe if we
consider that $s$ virtually occurs twice, but it will only be used
once: one half of it, inside $v$, the other half, inside $w$; the two
spare halves will be discarded.
Instead,  $\lamp$ is built around communication. We thus re-interpreted
disjunction redexes as communications in order to avoid unnecessary duplication in case
distinction constructs. Indeed, what we need in interpreting a $\pa \E
$ rule with premises $\Gamma \ergo s: A \pa B , \Delta$, $\quad
\Sigma_1 , x: A \ergo v:C, \Theta _1\quad $ and $ \quad \Sigma _2 ,
y:B \ergo w: D, \Theta _2 $ is simply a way to provide one half of $s$
to $v$ and the other half to $w$ when $s$ will assume a form which is
suitable for a reduction. Instead of duplicating $s$ and attaching one
copy of it to $v$ and one to $w$, we  just let $s$ be an autonomous
term -- living in parallel with all other terms in our multiple-conclusion -- and established a communication channel from $s$ to both
$v$ and $w$. Thus, when $s$ assumes the right form, we triggered a
communication transmitting part of $s$ to $v$ and part to $w$ as
required.

\bibliographystyle{plain}

\bibliography{bib_clinear}

\end{document}